\definecolor{gray}{RGB}{180,180,180}
\DeclareMathOperator{\lca}{lca}
\renewcommand{\b}{\beta}
\newcommand{\s}{\sigma}
\renewcommand{\S}{\mathcal{S}}
\newcommand{\C}{\mathcal{C}}
\newcommand{\tg}[1]{\tau(#1)}
\newcommand{\ov}[1]{\overline{#1}}
\newcommand{\trans}[1]{\blacktriangle(#1)}
\renewcommand{\sup}[1]{\mathcal{T}_{#1}}
\providecommand{\keywords}[1]{\textbf{\textit{Keywords: }} #1}
\newtheorem{theorem}{Theorem}
\newtheorem{lemma}{Lemma}
\newtheorem{proposition}{Proposition}
\newtheorem{corollary}{Corollary}
\newtheorem{definition}{Definition}
\begin{document}
\setlength{\marginparsep}{-0.6cm}
\setlength{\marginparwidth}{2.3cm}

\title{Predicting Horizontal Gene Transfers with Perfect Transfer Networks}

\author[1]{Alitzel L{\'o}pez S{\'a}nchez}
\author[1]{Manuel Lafond}

\affil[1]{\footnotesize Department of Computer Science, University of Sherbrooke, J1K2R1 Quebec, Canada.}

\date{}
\normalsize

\maketitle

\abstract{   
\textbf{Background:} 
Horizontal gene transfer inference approaches are usually based on gene sequences: parametric methods search for patterns that deviate from a particular genomic signature, while phylogenetic methods use sequences to reconstruct the gene and species trees.
However, it is well-known that sequences have difficulty identifying ancient transfers since mutations have enough time to erase all evidence of such events.  
In this work, we ask whether character-based methods can predict gene transfers.  Their advantage over sequences is that homologous genes can have low DNA similarity, but still have retained enough important common motifs that allow them to have common character traits, for instance the same functional or expression profile.  A phylogeny that has two separate clades that acquired the same character independently might indicate the presence of a transfer even in the absence of sequence similarity.

\textbf{Our contributions:} 
We introduce perfect transfer networks, which are phylogenetic networks that can explain the character diversity of a set of taxa under the assumption that characters have unique births, and that once a character is gained it is rarely lost. Examples of such traits include transposable elements, biochemical markers and emergence of organelles, just to name a few. We study the differences between our model and two similar models: perfect phylogenetic networks and ancestral recombination networks.
Our goals are to initiate a study on the structural and algorithmic properties of perfect transfer networks.  We then show that in polynomial time, one can decide whether a given network is a valid explanation for a set of taxa, and show how, for a given tree, one can add transfer edges to it so that it explains a set of taxa.  We finally provide lower and upper bounds on the number of transfers required to explain a set of taxa, in the worst case.
}

\bigskip
\noindent
\keywords{
 		horizontal gene transfer; tree-based networks; perfect phylogenies;character-based;gene-expression; indirect phylogenetic methods;
}

\sloppy

\section*{Introduction}

Evolution has historically been seen as a tree-like process in which genetic material is inherited through vertical descent.  However, it is now established that co-existing species from most kingdoms of life, if not all, have exchanged genetic material laterally through hybridation or horizontal gene transfer (HGT).  The latter is well-known to occur routinely between procaryotes~\cite{koonin2001horizontal,thomas2005mechanisms}, but is believed to have affected eucaryotes as well~\cite{keeling2008horizontal,hotopp2011horizontal}.  
HGT is also known to occur between viruses and their hosts~\cite{irwin2022systematic}, between mitochondria and the nucleus~\cite{anselmetti2021gene}, and between tumor cells~\cite{trejo2012cancer}.

Since HGTs play a significant role in shaping evolution, several bioinformatics approaches have been developed to identify them.  Most of these can be classified as either parametric or phylogenetic.
Parametric methods are based on the sequence of one genome of interest and attempt to find DNA regions that exhibit a signature that is different from the rest of the genome (see~\cite{ravenhall2015inferring}).
Phylogeny-based methods consist of taking a set of taxa (e.g. genes and/or species), reconstructing their phylogenetic tree, and inferring the unseen transfer locations on the tree.  A common way of achieving this is through \emph{reconciliation}, which aims to explain the discrepancies between a gene tree and a species tree by finding where the gene duplication, loss, or transfer events occurred~\cite{bansal2012efficient,doyon2010efficient,hellmuth2019reconciling,delabre2020evolution}.  
Finding a most parsimonious reconciliation under this model is NP-hard~\cite{tofigh2010simultaneous,kordi2015complexity,jacox2017resolution}, mainly because the inferred transfer locations need to be \emph{time-consistent}, meaning that they must occur between species that may have co-existed. 
In addition, recent fundamental approaches propose to identify pairwise gene relationships to infer transfers.  For instance, irregularities in the pairwise gene distances can pinpoint to possible transfers~\cite{schaller2021indirect}, or predictions of orthlogs, paralogs, and xenologs can help reconstructing a gene tree and a species \emph{network} that explain these relationships~\cite{geiss2018reconstructing,hellmuth2020generalized,lafond2020reconstruction,jones2022consistency}.

The above approaches are all based on gene sequences in one way or another, either to reconstruct the phylogenies or to infer pairwise relationships.
However, it is well-known that sequences have their limits for predicting HGT, especially in the case of ancient transfers~\cite{boto2010horizontal}.
In this work, we ask whether \emph{character-based} approaches can instead be used to predict HGT on a phylogeny.  
A character is a generic term to denote a trait that a taxa may possess or not, which can be morphological or molecular.  A common example of character-based data is gene expression, where a trait corresponds to whether a species expresses a gene or not in a condition of interest~\cite{de1995phenotypic,rawat2008novel,pontes2013configurable}.
A major advantage of using gene expression profiles, and possibly other character traits, over sequence data comes when highly divergent sequences are involved. In~\cite{Alexander2007}, the authors used expression to recover phylogenetic signals better than using only sequence similarity measures.  This could be because the necessary information to coordinate the folding or function of proteins is encoded in a small number of conserved fragments, in which case the two homologous proteins can share a small percentage of sequence similarity.
This can be leveraged to detect HGTs that are hard to find using sequences, since one could hypothesize that two clades that started expressing the same gene independently could have acquired this behavior by transfer.

The task in this setting is, given a set of characters $\C$ and a set of taxa $\S$ that each possess a subset of $\C$, to explain the diversity of $\S$ in a phylogeny.  
Ideally, $\S$ can be explained by a tree in which taxa that possess a common character form a clade, in which case the tree is 
called a \emph{perfect phylogeny}~\cite{bodlaender1992two,fernandez2001perfect,bafna2003haplotyping,iersel2019third}.
When no such perfect phylogeny exists, transfers may be required to explain the data.
We point out that recently, character-based methods have resurfaced in tumor phylogenetics, where they are used to represent whether a tumor clone has acquired a somatic mutation or not~\cite{della2017character,pradhan2018non,malikic2019phiscs,sashittal2021parsimonious}.

Before gene expression and other character-based data can be used to predict HGT, appropriate models and algorithmic frameworks need to be devised.
To our knowledge, character-based approaches have mostly been used to detect \emph{hybridation} events, where two or more species recombinate to produce an hybrid offspring.  In the most popular models, a set of taxa is explained by an \emph{ancestral recombination graph} (ARG), which is a acyclic directed graph in which nodes with multiple parents represent hybrids, and nodes with a single parent represent vertical descent~\cite{wang2001perfect,gusfield2004optimal,gusfield2014recombinatorics}.
The task of finding recombination events is different from that of finding HGTs.  Recombinations create offsprings whose genetic content is merged from the parents without vertical descent being involved directly.  As a result, there is no donor/recipient relationship.  
In the case of transfers, it is important to distinguish which traits were acquired vertically from the parent, and which traits were given by a donor. 

Another model called perfect phylogenetic networks (PPN) was also introduced in~\cite{nakhleh,nakhlehtesis} to study the evolution of languages, but can also be used for biological characters. To our knowledge, this is the first model that attempts to extend the notion of perfect phylogenies to networks. PPNs belong to the class of \emph{tree-based networks} \cite{francis2015phylogenetic,pons2019tree} which capture the idea of an underlying tree on which a set of transfer highways are ``attached''. The \emph{base tree} indicates where vertical descent occurred and the attached transfer edges clearly show where genetic material could have been exchanged. In this model, the characters can have multiple states and a character is compatible if the network contains a tree in which the character is \emph{convex} (i.e. the subgraph induced by nodes with the same state is connected).

Let us also mention that in~\cite{van2010quantifying}, the authors propose a framework to explain gene evolution using HGT on general networks, in order to minimize the number of genes present in the same ancestral species.

\textbf{Our contributions.}  We introduce \emph{perfect transfer networks} (PTN), which are tree-based networks that can explain how each character was acquired/transferred in a given set of taxa.  Our model is a direct generalization of perfect phylogenies to networks, as we use the same set of evolutionary rules.  That is, we require that in the network, a character acquired by an ancestral species is never lost by its vertical descendants as in the Camin-Sokal  parsimony model \cite{Camin1965}, and that each character has a unique origin.  Additionally, a character can only be transferred horizontally on the edges that are explicitly labeled as transfers. It is worth mentioning that in~\cite{avni2020new}, the authors study an HGT inference framework in which characters that admit a perfect phylogeny are ignored, whereas characters that do not are treated as evidence of transfers.  Our work can be seen as an effort to formalize this idea.

We then study the structural and algorithmic aspects of PTNs.  We first show that PTNs have two equivalent definitions that are both generalizations of perfect phylogenies.  We then distinguish PTNs from recombination networks and from perfect phylogenetic networks by showing that some taxa sets are explained by different networks depending on the model.

As for the algorithmic aspects, we study three different problems. First, we ask whether a given tree-based network can explain the characters of a set of taxa and provide a simple, polynomial-time algorithm for the problem.  Second, we study the tree completion problem where, given a tree, we are asked to add transfers to it so that it explains the input taxa.
We show that any tree can explain any set of taxa, even if the characters at the ancestral nodes of the tree are constrained by the input.  
Third, we study the reconstruction problem, where only the taxa are known and we must reconstruct a tree-based network with a minimum number of transfers that explains them.  The algorithmic classification of this problem remains open, but we provide nearly exponential lower and upper bounds on the number of transfers required in the worst case, with respect to the number of characters. 
We then conclude with a discussion on open problems, including the problem of adding a minimum number of transfers to a tree to make it explain a set of taxa.

\section*{Preliminaries}
In this section, we describe the standard phylogenetic notions used in the paper, and then define our perfect transfer network model.

\subsection*{Phylogenetic Networks and tree-based networks}
For an integer $n$, we use the notation $[n] = \{1, \ldots, n\}$.  All graphs in this work are directed and loopless.
A directed graph $G$ is \emph{connected} if the underlying undirected graph of $G$ is connected.
A \emph{binary phylogenetic network}, or simply a \emph{network} for short, is a directed acyclic graph $G = (V, E)$ such that either $|V| = 1$, or such that $G$ satisfies the following conditions:
\begin{itemize}
    \item there is a set of vertices with in-degree 1 and out-degree 0, called \emph{leaves}.
    \item there is a unique vertex with in-degree 0 and out-degree $2$, called the \emph{root}.
    \item every other vertex has either in-degree 1 and out-degree 2 (\emph{tree nodes}), in-degree 2 and out-degree 1 (\emph{reticulation nodes}), or in-degree $1$ and out-degree $1$ (\emph{subdivision nodes}).
\end{itemize}

We say that $(u, v) \in E$ is a \emph{tree edge} if $v$ is a tree node or a leaf.
Note that the usual definition of a network forbids subdivision nodes.  We allow them only because it simplifies some of the definitions and proofs.

For a network $G$, we write $\rho(G)$ for the root of $G$ and $L(G)$ for the leaves. 
If $|V(G)| = 1$, then we define $\rho(G)$ as the single vertex of $G$ and consider that $L(G) = \{\rho(G)\}$.
If $\s$ is a bijection from $L(G)$ to a set $\S$, we call $\s$ an \emph{$\S$-map for $G$}, or just an \emph{$\S$-map} if $G$ is understood. 
Now suppose that $G$ is a directed graph, network or not.
We say that $u \in V(G)$ \emph{reaches} a node $v \in V(G)$ if there exists a directed path from $u$ to $v$ in $G$.
We denote by $R_u(G)$ the set of nodes that $u$ reaches in $G$, and we note that $u \in R_u(G)$.
For a subset $W$ of $V(G)$, we denote by $G[W]$ the subgraph of $G$ induced by $W$.
We will also denote by $G-W$ the graph obtained by the removal of $W$ from $V(G)$ and all of its incident edges.  In other words, $G - W = G[V(G) \setminus W]$.  

A \emph{tree} $T$ is a network whose underlying undirected graph has no cycles.
We say that $W \subseteq V(T)$ \emph{forms a subtree of $T$} if $T[W]$ is a tree.
We say that a vertex $v \in V(T)$ is an ancestor of $u \in V(T)$ if $v$ is on the path from $\rho(T)$ to $u$. In this case, we will call $u$ a descendant of $v$.  Note that $v$ is an ancestor and descendant of itself.
The \emph{ancestor order} $\preceq_T$ is a partial order in a tree $T$. When $u \prec_T v$ we say that $v$ is an \emph{ancestor} of $v$ and $u$ is considered a \emph{descendant} of $v$. In this partial order we have that the root $\rho$ of $T$ is the unique maximal element. We say that two nodes $u$ and $v$ are \emph{comparable} if $u \preceq_T v$ or $v \preceq_T u$. We say that they are \emph{incomparable} otherwise. We will drop the subscript $T$ when $T$ is clear from the context.
For $v \in V(T)$, we will use $T(v)$ to refer to the subtree of $T$ rooted at $v$ (that is, $T(v)$ contains $v$ and all of its descendants). 

A network $G = (V, E)$ is a \textbf{tree-based network} \cite{Pons2018} if $G$ has no subdivision nodes, and there is a partition $\{E_S, E_T\}$ of $E$ such that the subgraph $\sup{G} := (V, E_S)$ is a tree with the same set of leaves as $G$, which is called the \emph{support tree} of $G$.  The edges in $E_S$ are called \emph{support edges} and the edges in $E_T$ are called \emph{transfer edges}.  
Note that $\sup{G}$ contains subdivision nodes, unless $E_T$ is empty.  The tree obtained from $\sup{G}$ by suppressing its subdivision nodes is called the \emph{base tree} of $G$ (suppressing a subdivision node $u$ with parent $p$ and child $v$ consists of removing $u$ and adding an edge from $p$ to $v$).
Roughly speaking, a tree-based network $G$ can be obtained by starting with a tree and inserting transfer edges into it. Note that in most cases, the partition of the edges into $E_s$ and $E_T$ will be known (whereas tree-based networks merely require these to exist). When these edge sets are given, the network is sometimes called an \emph{LGT network}, see~\cite{Cardona2015reconstruction}.

As mentioned in the introduction, networks should be \textit{biologically-feasible} in terms of time. We define a \emph{time consistent map} over a tree-based network $G$ with support edges $E_S$ and transfer edges $E_T$ as a function $\tau: V \to \mathbb{R}$ such that: 
\begin{itemize}
    \item  for every $(u,v) \in E_S$, $\tau(u) > \tau(v)$.
    \item for every $(u,v) \in E_T$, $\tau(u) = \tau(v)$.
\end{itemize}
We say that $G$ is a \emph{time-consistent tree-based network} if there exists a time consistent map for $G$~\cite{FRANCIS201893}. 
Note that the existence of a time-consistent map on a network implies that it is tree-based~\cite{murakami2021phylogenetic} (but the converse does not necessarily hold).
In the following sections, we will assume that all the tree-based networks are time-consistent without explicit mention.

\subsection*{Perfect transfer networks}
We now propose to extend the \emph{perfect phylogeny} model to tree-based networks. Let ${\S = \{S_1, S_2, \dots, S_n\}}$ be a set of taxa and $\C = \{ c_1, c_ 2, \dots, c_m\}$ a set of characters. 
We view a taxa $S_i$ as the set of characters that it possesses, so that for each $i \in [n]$, $S_i$ is a subset of $\C$.  
Our goal is to explain the character diversity of $\S$ using its evolutionary history.  Given a tree-based network $G$ with $\S$-map $\s$, we want to know where each character appeared in $G$ under the conditions that each character has a single origin, that it cannot be lost once acquired, and that it can be transferred. 
Throughout the phylogenetic literature, requiring a single origin is called the \emph{homoplasy-free} assumption (or sometimes the ``no parallel evolution'' or ``no convergent evolution''), which states that characters cannot arise independently in unrelated lineages~\cite{sanderson1996homoplasy,semple2002tree}.  HGT is not considered to be a cause of homoplasy, but of course homoplasy can occur even in the presence of HGT.  Nonetheless, this assumption has historically been used as a first step towards more complex models (see  e.g.~\cite{ringe2002indo}).

To formalize this, given a tree-based network $G$, a \emph{$\C$-labeling} of $G$ is a function $l : V(G) \rightarrow 2^{\C}$ that maps each node of $G$ to the subset of characters that it possesses (here, $2^{\C}$ represents the powerset of $\C$).  
For a character $c \in \C$, we will denote by $V_c(l) = \{v \in V(G) : c \in l(v)\}$ the set of nodes that possess character $c$, and we denote by $\overline{V}_c(l) = V(G) \setminus V_c(l)$ the nodes that do not have it.  If $l$ is clear from the context, then we may simply write $V_c$ and $\ov{V}_c$.

Our evolutionary requirements are encapsulated in the following definition.

\begin{definition}[Perfect transfer networks]\label{def:hgt-explain}
Let $\S$ be a set of taxa on characters $\C$, let $G = (V, E_S \cup E_T)$ be a tree-based network, and let $\s$ be an $\S$-map for $G$.
We say that a $\C$-labeling $l$ of $G$ \emph{explains} $\S$ if the following conditions hold:
\begin{itemize}
    \item 
    for each $v \in L(G)$, $l(v) = \s(v)$;
    
    \item 
    for each support edge $(u, v) \in E_S$, $c \in l(u)$ implies that $c \in l(v)$ (never lost once acquired);
    
    \item 
    for each $c \in \C$, there exists a unique node $v \in V_c(l)$ that reaches every node of $V_c(l)$ in $G[V_c(l)]$ (single origin).

\end{itemize}

Furthermore, we call the pair $(G, \s)$ a \emph{perfect transfer network} (PTN) for $\S$ if there exists a $\C$-labeling of $G$ that explains $\S$.
\end{definition}

See Figure~\ref{fig:main_fig}.2 for an example of a PTN.  Later on in Theorem~\ref{thm:connect}, we will show that Definition~\ref{def:hgt-explain} is similar, though slightly different, to a connectedness requirement known as \emph{convexity} on each character, see~\cite{gusfield2014recombinatorics}.
Notice that if $G$ is a tree, then every edge is a support edge and Definition~\ref{def:hgt-explain} coincides with the definition of a perfect phylogeny from~\cite{gusfield2014recombinatorics}.
If $G = (V, E_S \cup E_T)$ is a tree-based network, the definition does not explicitly state what can or cannot be done with transfer edges.  The way to see this is that the definition \emph{does not forbid} ancestral taxa from using transfer edges.  That is, if $(u, v) \in E_T$, then $u$ can transmit any subset of its characters to $v$ horizontally. The motivation behind the requirements of Definition \ref{def:hgt-explain} is to model the presence and absence of traits that have unique origins and cannot be lost throughout evolutionary processes. Examples of such traits include transposable elements (TEs) which are unique genomic sequences that have integrated into the genome and are rarely lost~\cite{Bourque2018}, biochemical markers such as metabolites which are small molecules that function as intermediates and products of metabolic processes \cite{AltafUlAmin2019}, and emergence of organelles such as mitochondria or chloroplasts that results from endosymbiotic events and is irreversible \cite{Zachar2020,anselmetti2021gene}. It is worth highlighting that the horizontal transfer of TEs between species is a prevalent phenomenon that significantly contributes to their sustained viability over time \cite{Wells2020}. As for metabolites, it has been previously shown that HGT plays a role in the generation of new metabolic pathways in bacteria \cite{Goyal2022}.

We are interested in the following algorithmic problems:
\begin{itemize}
    \item 
    \emph{The PTN-recognition problem:} given a tree-based network $G$ with $\S$-map $\s$, is $(G, \s)$ a PTN for $\S$?  That is, does there exist a $\C$-labeling of $G$ that explains $\S$? See Figure~\ref{fig:main_fig}.4 for an example of a network that is not a PTN.
    
    \item 
    \emph{The tree-completion problem:} given a tree $T$ with $\S$-map $\s$, does there exist a PTN $(G, \s)$ for $\S$ such that $T$ is the base tree of $G$? 
    See Figure~\ref{fig:main_fig}.2.

    We are also interested in the minimization variant of this problem, where we require that $(G, \s)$ has a minimum number of transfer edges.
    See Figure~\ref{fig:main_fig}.3.
    

    \item 
    \emph{The PTN-reconstruction problem:} given a set of characters $\C$ and a set of taxa $\S$, find a PTN $(G, \s)$ for $\S$ with a minimum number of transfer edges.
    
\end{itemize}

\begin{figure}[h!]
    \centering
    \includegraphics[width=0.85\textwidth]{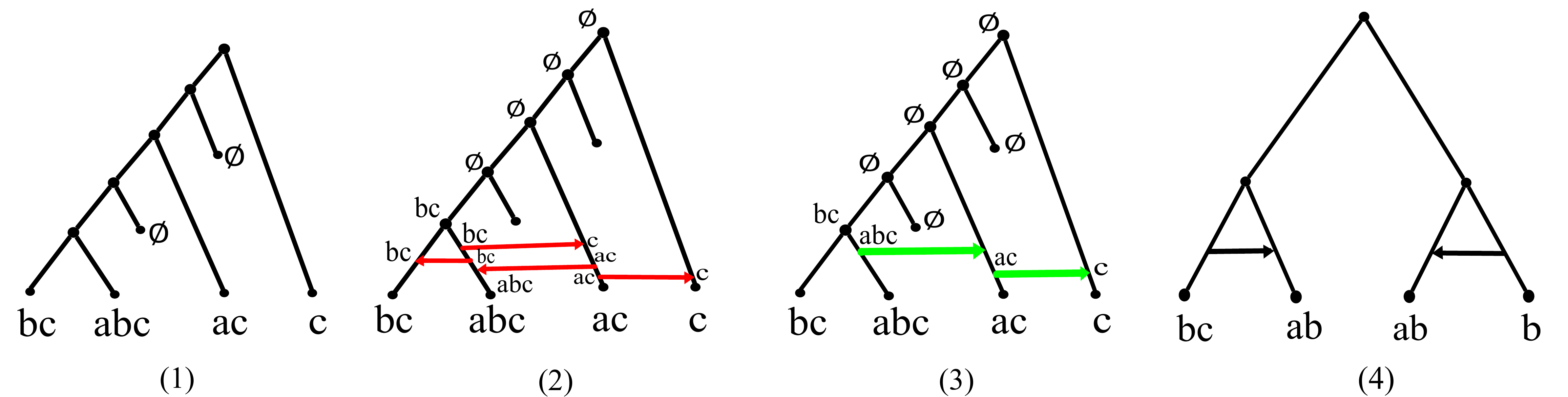}
    \caption{
    (1) A species tree $T$ with a set of taxa $\S$ on characters $\C = \{a,b,c\}$. (2) A PTN with $T$ as base tree that explains $\S$. Red arrows represent transfer edges. (3) Another PTN with $T$ as base tree that also explains $\S$.  Note that this PTN requires two less transfers.
    (4) A tree-based network $G$ with $\S-$map $\s$ for which no labeling can explain $\S$. }
    \label{fig:main_fig}
\end{figure}

We show that the recognition problem can be solved in time $O(|\C||V(G)|^2)$.  For the tree-completion problem, we provide a more general result: any tree with any given pre-labeling can explain any set of taxa.  To be more specific, for any given tree $T$ and any $\C$-labeling of $T$ that satisfies the \emph{never lost once acquired} condition, one can always explain $\S$ by adding transfers in a time-consistent manner while preserving the given labeling.  This motivates the need for the minimization variant, which leads to several open problems.
For the tree reconstruction problem, we give exponential lower and upper bounds on the number of transfers required by a set of $k$ characters, in the worst case.

\section*{Properties of the perfect transfer model}
Before delving into the algorithms, we study our model a bit more in-depth.  First, we provide an alternate definition of perfect transfer networks in terms of character connectedness.  This definition is sometimes easier to deal with in our proofs, and is akin to perfect phylogenies that also admit a similar equivalent definition.  
Second, we ensure that our model does not reinvent the wheel by explicitly stating its differences with other models.

\begin{theorem}\label{thm:connect}
Let $G = (V, E_S \cup E_T)$ be a tree-based network with $\S$-map $\s$.  
Then a $\C$-labeling $l$ of $G$ explains $\S$ if and only if the following conditions hold:
\begin{itemize}
    \item 
    for each $v \in L(G)$, $l(v) = \s(v)$; 
    
    \item 
    for each $c \in \C$, $G[V_c(l)]$ is connected and contains a unique node of in-degree $0$;
    
    \item 
    for each $c \in \C$, either $V = V_c(l)$, or $\sup{G}[\ov{V}_c(l)]$ is connected and contains $\rho(G)$.
    
\end{itemize}
\end{theorem}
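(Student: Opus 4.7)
Since the leaf-label condition appears verbatim in both definitions, the plan is to reduce the theorem to two per-character equivalences: for each $c \in \C$, (a) the support-edge condition ``$c \in l(u)$ and $(u,v) \in E_S$ imply $c \in l(v)$'' is equivalent to ``$V = V_c(l)$, or $\sup{G}[\ov{V}_c(l)]$ is connected and contains $\rho(G)$'', and (b) the single-origin condition is equivalent to ``$G[V_c(l)]$ is connected and contains a unique node of in-degree $0$''. Combining these per-character statements yields the theorem.

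For (a), the contrapositive of the support-edge condition states that if $v \in \ov{V}_c(l)$ and $(u,v) \in E_S$ then $u \in \ov{V}_c(l)$; that is, $\ov{V}_c(l)$ is closed under taking parents in the support tree $\sup{G}$. When $\ov{V}_c(l)$ is non-empty, walking up $\sup{G}$ from any element of $\ov{V}_c(l)$ then forces $\rho(G) \in \ov{V}_c(l)$, and the ancestor-closedness immediately yields that $\sup{G}[\ov{V}_c(l)]$ is a connected subtree. Conversely, if $\sup{G}[\ov{V}_c(l)]$ is a connected induced subgraph of the tree $\sup{G}$ containing $\rho(G)$, then it is ancestor-closed: in a tree, the unique undirected path joining any two nodes of a connected induced subgraph lies entirely inside it, and here that path from $\rho(G)$ to any $v \in \ov{V}_c(l)$ is exactly the ancestor chain of $v$. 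This gives back the support-edge condition in contrapositive form.

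For (b), in the forward direction, if $v_c \in V_c(l)$ reaches every node of $V_c(l)$ inside $G[V_c(l)]$, then $G[V_c(l)]$ is weakly connected through $v_c$; moreover $v_c$ has in-degree $0$ in $G[V_c(l)]$, because otherwise acyclicity of $G$ would force any in-neighbor of $v_c$ to also reach all of $V_c(l)$, contradicting uniqueness; and no other node can have in-degree $0$ in $G[V_c(l)]$, since such a node would not be reachable from $v_c$. Conversely, assuming $G[V_c(l)]$ is connected with a unique in-degree-$0$ vertex $v_c$, the standard DAG fact that every node admits a source ancestor (obtained by traversing incoming edges backward until one runs out) shows that $v_c$ reaches every node of $V_c(l)$; uniqueness of the origin then follows because any other reaching-everyone node would reach $v_c$, contradicting the in-degree-$0$ status of $v_c$ in $G[V_c(l)]$.

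The main obstacle, I expect, is phrasing the reverse direction of (b) rigorously, since ``connected'' refers to the underlying undirected graph while ``reaches'' is directional. I plan to dispatch this by induction on a reverse topological order of $G[V_c(l)]$: the unique-source hypothesis guarantees that every non-source vertex of $G[V_c(l)]$ has at least one incoming edge from within $V_c(l)$, along which reachability from $v_c$ propagates inductively. All other ingredients reduce to well-known properties of trees (connected induced subgraphs of a tree are subtrees) and finite DAGs (existence of source ancestors), which I would cite rather than reprove.
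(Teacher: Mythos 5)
Your proposal is correct and follows essentially the same route as the paper's proof: the forward direction uses that the single origin must be the unique in-degree-$0$ node of $G[V_c(l)]$ and that the never-lost condition makes $\ov{V}_c(l)$ ancestor-closed (hence connected through $\rho(G)$) in $\sup{G}$, while the converse recovers the single origin from the unique source of the acyclic graph $G[V_c(l)]$ and recovers never-lost from connectivity of $\ov{V}_c(l)$ with the root. Your repackaging as two per-character, per-condition equivalences is only an organizational difference, and your handling of the reverse direction of (b) (every vertex reaches back to a source, which must be $v_c$) matches the paper's $P_u$ argument.
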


\begin{proof}
($\Rightarrow$)
Suppose that $l$ is a $\C$-labeling of $G$ that explains $\S$, according to Definition~\ref{def:hgt-explain}.
We argue that the three conditions stated in the theorem are true.
By Definition~\ref{def:hgt-explain}, $l(v) = \s(v)$ for each $v \in L(G)$ holds.
For the other conditions, let $c \in \C$.
Let $v$ be the unique node of $V_c(l)$ that reaches every node in $G[V_c(l)]$, which is guaranteed to exist by Definition~\ref{def:hgt-explain}.
Because $G[V_c(l)]$ is acyclic, no node other than $v$ can reach $v$ in $G[V_c(l)]$.  This implies that $v$ has in-degree $0$ in $G[V_c(l)]$.  
Moreover, $G[V_c(l)]$ is connected since $v$ reaches all of its nodes.
If $\ov{V}_c(l)$ is empty, then the third condition also holds, so assume this is not the case.  Let us now focus on $\sup{G}[\ov{V}_c(l)]$.
Observe that because characters are never lost once acquired, $l$ satisfies the property that for every $(u, w) \in E_S$, $w \in \ov{V}_c(l)$ implies that $u \in \ov{V}_c(l)$.
This in turn implies that for any $u \in \ov{V}_c(l)$, every ancestor of $u$ in $\sup{G}$ is in $\ov{V}_c(l)$, including the root $\rho(G)$.
Since we assume that $\ov{V}_c(l)$ is non-emtpy, it follows that $\rho(G) \in \ov{V}_c(l)$.
It also follows that $\sup{G}[\ov{V}_c(l)]$ is connected because all of its nodes have a path to $\rho(G)$.

($\Leftarrow$)
Suppose that $l$ satisfies all the conditions of the theorem.  We show that  all properties of Definition~\ref{def:hgt-explain} hold.
For each $v \in L(G)$, we know that $l(v) = \s(v)$.
For the other conditions, let $c \in \C$.
First suppose for contradiction that there is a support edge $(u, v) \in E_S$ such that 
$u \in V_c(l)$ but $v \in \ov{V}_c(l)$. 
Thus $\ov{V}_c(l)$ is not empty, in which case $G[\ov{V}_c(l)]$ is connected and contains $\rho(G)$.
The path in the support tree $\sup{G}$ from $\rho(G)$ to $v$ goes through $u$.  
But $\rho(G) \in \ov{V}_c(l)$, which is a contradiction since $\sup{G}[\ov{V}_c(l)]$ is connected, but here $u$ disconnects $\rho(G)$ from $v$ in $\sup{G}$.  Thus the condition of never losing acquired characters holds.
Now let $v$ be the unique node of $V_c(l)$ of in-degree $0$ in $G[V_c(l)]$.
Assume that there is some $u \in V_c(l)$ that $v$ does not reach in $G[V_c(l)]$.
Let $P_u$ be the set of nodes of $V_c(l)$ that reach $u$ in $G[V_c(l)]$.
Because $G[V_c(l)]$ is acyclic, $P_u$ must contain a node $w$ of in-degree $0$, contradicting that 
$v$ is the unique node of in-degree $0$.
Thus $v$ reaches every node in $G[V_c(l)]$ and, because it has in-degree $0$, it is the unique such node.  
Thus the single origin condition is satisfied.
\end{proof}

\subsection*{Perfect transfer networks versus perfect phylogenetic networks}

Before we move on, it is important to put our model in perspective with Perfect Phylogenetic Networks (PPN), which, to our knowledge, are the closest to our work. The idea of a PPN is that a network contains several evolutionary trees, and a character could evolve in any one of those trees. More specifically, in the PPN model, characters are multi-state, so that a character can be in any of the set of possible states $Z$.  
 A character $c$ is compatible with a tree $T$ whose leaves are labeled by the states of $c$ if there is a state-labeling of the internal nodes of $T$ such that, for each $z \in Z$,  the nodes in state $z$ form a connected subgraph of $T$. Given a network $G$ and a tree $T$, we say that $T$ is \emph{displayed} by $G$ if $T$ can be obtained by successively removing reticulation edges and suppressing subdivision nodes. A tree-based network $G$ is a PPN for a set of characters $\C$ if each character is compatible with some tree displayed by $G$. In terms of character evolution, for PPNs every state is subject to the same evolutionary constraints.  
 In contrast, the character evolution model implied by PTNs represents ``presence'' and ``absence'' of a trait, and these two states have different behavior. 
 This difference plays an important role for transfer edges. The PTN model explicitly prohibits the transfer of an ``absence'' state, whereas for PPNs any state is allowed to be transferred.  See Figure~\ref{fig:ppn_ptn} for an example of a tree-based network that is a PPN, but not a PTN (if the state $0$ is interpreted as absence).
\begin{figure}[h]
    \centering
    \includegraphics[width=0.7\textwidth]{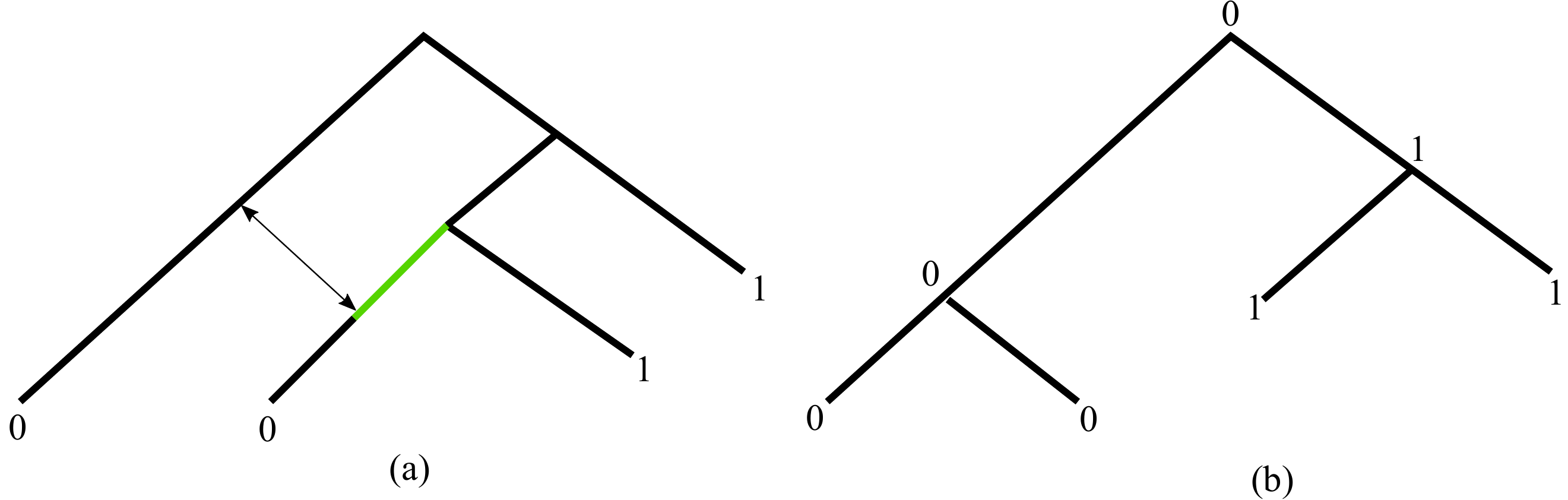}
    \caption{(a) A PPN network with one character $c$ with two states $\{0,1\}$. (b) A tree displayed by $G$ that admits a labeling such that every state forms a connected component. Note that this tree can be obtained by the removal of the green edge and suppression of the resulting subdivision node. The interpretation of this scenario is that state $0$ was transferred using the birectional edge. On the other hand, a solution for PTNs would forbid the transfer of state absence, and would also 
disregard the removal of the green edge, in which case no $\C$-labeling is possible.}
    \label{fig:ppn_ptn}
\end{figure}

Also note that it is NP-hard to decide whether a given network $G$ is a PPN for a given set of characters $\C$~\cite{nakhlehtesis}.  
Later on, we will show that the analogous recognition problem for PTNs can be solved in polynomial time.
However, our result holds for binary character states and the hardness proof for PPNs relies on the character states being non-binary, and it is unclear whether the hardness is preserved for binary character states.

\subsection*{Perfect transfer networks versus recombination networks}
Another model with goals that are similar to ours are recombination networks.
In this model, the indices of the characters $\C = \{c_1, \ldots, c_m\}$ determine an ordering of the characters.  For a string $B$, $B[j]$ denotes its $j$-th character and $B[i .. j]$ its substrings containing positions from $i$ to $j$.  Each taxa $S_i \subseteq \C$ can be represented as an $m$-bit string $\b(S_i)$ in which $\b(S_i)[j] = 1$ 
if and only if $S_i$ possesses character $c_j$.  
Given an $m$-bit string $B$ and an odd integer $d$, we say that $B$ is a \emph{$d$-crossover} of two other $m$-bit strings $X$ and $Y$ if there are indices $i_1, \ldots, i_d$ such 
that $B = X[1 .. i_1] Y[i_1 + 1 .. i_2] X[i_2 + 1 .. i_3] \ldots Y[i_d + 1 .. n]$.  
If $d$ is even, the definition of a $d$-crossover is the same except that the last substring is $X[i_d + 1 .. n]$.   Note that the roles of $X$ and $Y$ are interchangeable.
For a network $G$ and $\S$-map $\s$, a \emph{binary $\C$-labeling} of $G$ is a function $f$ in which 
$f(v)$ is an $m$-bit binary string for each $v \in V(G)$, such that $f(\rho(G))$ only contains $0$s.
A binary $\C$-labeling $f$ of $G$ \emph{explains $\S$ with $d$-crossovers} if $f(v) = \b(\s(v))$ for each $v \in L(G)$, and the following holds:
\begin{itemize}

\item 
for each reticulation node $v$ with parents $u$ and $w$, $f(v)$ is a $d$-crossover of $f(u)$ and $f(w)$;

\item 
for each tree edge $(u, v)$, $f(v)$ is obtained from $f(u)$ by flipping some $0$ positions to $1$ (we may decide to flip none);

\item 
for each $i \in [m]$, there is at most one tree edge $(u, v)$ with $f(u)[i] = 0$ but $f(v)[i] = 1$.
\end{itemize}

We say that $(G, \s)$ is an \emph{ancestral recombination graph} (ARG) for $\S$ with $d$-crossovers if there exists a binary $\C$-labeling of $G$ that explains $\S$ with $d$-crossovers.
We denote by $ARG_d(\S)$ the set of all ARGs for $\S$ with $d$-crossovers.  We also denote $ARG_{\infty}(\S) = \bigcup_{d=1}^{\infty} ARG_d(\S)$.

In the literature, single crossovers are modeled with $d = 1$ and have been studied extensively.  The $d = 2$ case is often referred to as double crossovers and can also model gene conversion.  It was stated in~\cite{gusfield2014recombinatorics} that $d > 6$ is rarely considered in practice. 

The most obvious difference between ARGs and PTNs is that ARGs were introduced to model hybridation events, whereas we created PTNs to model horizontal gene transfer.  
More specifically, ARGs do not differentiate between the two parents of a reticulation node, whereas in our model, one parent only transmits genetic content via vertical descent whereas the other does so via HGT.  In fact, ARGs do not need to be tree-based networks.
Although this is a fundamental difference, it is also interesting to ask whether, among the class of tree-based networks, the data explanation depends on the model.

To formalize this, we write $PTN(\S)$ for the set of perfect transfer networks for $\S$.
The next result shows that ARGs with $d$-crossovers are incomparable with PTNs unless we allow an arbitrary number of crossovers.  We emphasize that even though infinite crossovers can emulate transfers, they still cannot distinguish vertical from horizontal inheritance.

\begin{figure}[h]
    \centering
    \includegraphics[width=0.9\textwidth]{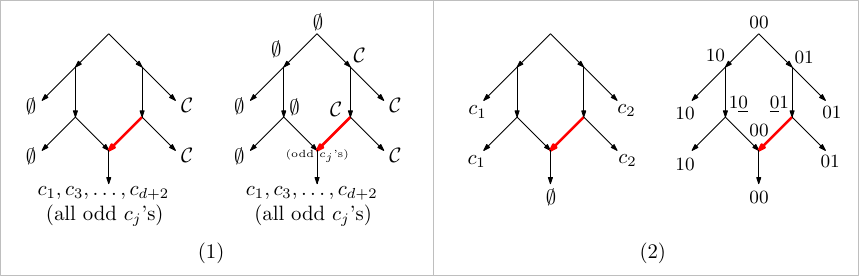}
    \caption{(1) Left: a network $G$ with $5$ taxa at the leaves.  The fat red edge is a transfer edge.  Two taxa have no character, two have all characters of $\C$, and one has only the odd numbered taxa (we assume odd $d$ in the figure).  Right: a $\C$-labeling that explains $G$ (the reticulation receives the odd characters from the transfer edge).  This network has no explanation with $d$-crossovers.  (2) Left: a network $G$ with $5$ taxa on character set $\C = \{c_1, c_2\}$.  Right: a binary $\C$-labeling with single crossovers that explains $G$.  This network is not a PTN.}
    \label{fig:ptn-arg1}
\end{figure}

\begin{proposition}
\label{prop:ptnvsarg}
 The following relationships between PTNs and ARGs hold:
\begin{itemize}
\item 
for any fixed $d \geq 1$, there exists a set of taxa $\S$ on $d + 2$ characters such that $PTN(\S) \setminus ARG_d(\S)$ is non-empty;

\item  
there exists a set of taxa $\S$ on two characters such that $ARG_1(\S) \setminus PTN(\S)$ is non-empty;

\item 
for any set of taxa $\S$, $PTN(\S) \subseteq ARG_{\infty}(\S)$.
\end{itemize}
\end{proposition}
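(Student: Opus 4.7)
The plan is to treat the three items separately. Items (i) and (ii) are concrete separations witnessed by the examples in Figure~\ref{fig:ptn-arg1}, whereas item (iii) is a structural inclusion that will require first normalizing any PTN labeling and then translating it into a binary $\C$-labeling.

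For item (i), I would fix $d \geq 1$, set $\C = \{c_1, \ldots, c_{d+2}\}$, and use the network on the left of Figure~\ref{fig:ptn-arg1}.1: a tree-based network with five leaves, two of taxa $\emptyset$, two of taxa $\C$, and one leaf $v^\ast$, the unique descendant of a single reticulation $r$, with taxa $\{c_1, c_3, c_5, \ldots\}$. A PTN labeling exists by using the transfer edge entering $r$ to deliver exactly the odd-indexed characters from an ancestor of a full-character leaf. On the ARG side, the monotone $0\to 1$ rule on tree edges together with the single-birth constraint pin the two parents of $r$ to the labels $0^{d+2}$ and $1^{d+2}$. Any $d$-crossover of these two constant strings consists of exactly $d+1$ maximal constant blocks of alternating values, whereas the required $f(r)$ alternates at every position and thus has $d+2$ maximal blocks. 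No such crossover exists, so $(G, \s) \notin ARG_d(\S)$.

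For item (ii), I would take the network on the left of Figure~\ref{fig:ptn-arg1}.2 with $\C = \{c_1, c_2\}$: a reticulation $r$ whose two parents $u, w$ each sit above two identical leaves of taxa $\{c_1\}$ (below $u$) and $\{c_2\}$ (below $w$), and where $r$ itself has a single leaf descendant with taxa $\emptyset$. Since each character appears on two leaves, the single-birth rule pins $f(u) = 10$ and $f(w) = 01$, and the single crossover $B = w[1]\,u[2] = 00$ gives $f(r) = 00$ consistent with $r$'s descendant, so $(G,\s) \in ARG_1(\S)$. On the PTN side, the same two-leaf trick forces $l(u) = \{c_1\}$ and $l(w) = \{c_2\}$ (otherwise $G[V_c(l)]$ would have two in-degree-$0$ nodes), so whichever of $u$ or $w$ is declared the support parent of $r$, the inclusion $l(\text{support parent}) \subseteq l(r)$ puts at least one character in $l(r)$, while the support edge from $r$ to its empty leaf forces $l(r) \subseteq \emptyset$. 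This contradiction gives $(G,\s) \notin PTN(\S)$.

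For item (iii), given a PTN labeling $l$ of $(G,\s)$, my plan is first to normalize $l$ into $l'$ so that for every character $c$ the origin $v_c$ is a tree node or a leaf, and then to define $f(v) = \mathbf{1}_{l'(v)}$. The normalization uses that, in a tree-based network, the unique out-edge of any reticulation lies in $E_S$: if $v_c$ is a reticulation, I follow the forced chain $v_c \to v_c^{(1)} \to \cdots \to u_c$ through reticulations until reaching a non-reticulation $u_c$, and remove $c$ from every $v_c^{(j)}$ with $j < k$. A short acyclicity argument shows that for each intermediate reticulation, its other parent cannot be reached from $v_c$ in $G$ without forcing a directed cycle through the chain, and therefore never carries $c$; hence $u_c$ is the unique in-degree-$0$ node of $G[V_c(l')]$ and $l'$ still explains $\S$ by Theorem~\ref{thm:connect}. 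With $f$ defined in this way, every tree edge $(u,v)$ is a support edge (because $v$ has in-degree one in $G$), so $l'(u) \subseteq l'(v)$ yields monotone $0\to 1$ flips with a single flip per character at the edge entering $u_c$. For each reticulation $v$ with support parent $u$ and transfer parent $w$, single origin gives the inclusions $l'(u) \subseteq l'(v) \subseteq l'(u) \cup l'(w)$, so every coordinate of $f(v)$ agrees with the corresponding coordinate of $f(u)$ or $f(w)$, which means $f(v)$ is a $d_v$-crossover of $f(u)$ and $f(w)$ for some $d_v \leq |\C| - 1$. Taking $d = \max_v d_v$ places $(G,\s)$ in $ARG_d(\S) \subseteq ARG_{\infty}(\S)$. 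The main obstacle is precisely the normalization step, since one must be sure that peeling $c$ off a reticulation chain does not silently destroy the uniqueness of the in-degree-$0$ node in $G[V_c(l')]$; this is where the acyclicity observation about the other parents of the chain does all the work.
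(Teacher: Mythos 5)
Your proposal is correct and follows essentially the same route as the paper: the same two figure networks witness the separations in items (i) and (ii) via the same block-counting and ``support parent forces a character down to the empty leaf'' arguments, and item (iii) is proved by the same two-step scheme of first normalizing the labeling so no character first-appears at a reticulation and then reading it off coordinate-wise as a crossover labeling. Your chain-following normalization (peeling the character off a chain of reticulations, justified by the out-degree-one/acyclicity observation) is a slightly more explicit variant of the paper's one-step push-down to the reticulation's unique child, but it is the same idea and both yield the conclusion via Theorem~\ref{thm:connect}.
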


\begin{proof}
For (1), consider the network $G$ in Figure~\ref{fig:ptn-arg1}.1.  For any $\C$, the labeled network shows that $(G, \s)$ is in $PTN(\S)$.  Put $\C = \{c_1, \ldots, c_{d + 2}\}$.  We argue that $(G, \s)$ is not in $ARG_d(\S)$.  
Note that in any binary $\C$-labeling of $G$ that explains $\S$, the parent of each $\emptyset$ leaf taxa must be the string $00 \ldots 0$, as otherwise they would need to transmit a $1$ to these leaves.
Likewise, the parent of the upper $\C$ leaf taxon must be $11 \dots 1$.  If not, there would be a $0-1$ flip on the edge leading to the upper $\C$.  But, we would also need another $0-1$ flip somewhere on the path to the lower $\C$ taxon, which is not allowed in ARGs.  
So, the parents of the single reticulation have labels $00 \ldots 0$ and $11 \ldots 1$.  Moreover, the root is labeled $00 \ldots 0$, and thus all the possible $0-1$ flips occur between the root and its right child.  We cannot have any new $0-1$ flip, and so the reticulation must be labeled $0101 \ldots 01$ to be able to transmit the required characters to the odd-characters leaf (or the last character is $0$ if $d$ is even).  Whether $d$ is odd or even, there are $d + 2$ characters and we must alternate each of them between the parents.  This requires $d + 1$ crossovers, and so $(G, \s) \notin ARG_d(\S)$.

For (2), consider the network $G$ in Figure~\ref{fig:ptn-arg1}.2. 
There are only two characters $c_1, c_2$.
The right side of the figure shows that $G$ can be explained under single crossovers.  However, we can argue that $(G, \s)$ is not a perfect transfer network.  This is because if there is a $\C$-labeling $l$ that explains the taxa at the leaves, the parent of each $c_1$ leaf must contain $c_1$.  But by the \emph{never lose once acquired} condition, $c_1$ should then be transmitted vertically to the leaf that has no character, a contradiction.  Thus $(G, \s) \notin PTN(\S)$.

For (3), let $(G, \s) \in PTN(\S)$.  Let $l$ be a $\C$-labeling of $G$ that explains $\S$.  To show that $(G, \s) \in ARG_{\infty}(\S)$, we need to modify $l$ slightly.  Specifically, we ensure that characters never have their first-appearance on a reticulation node.
Let $v$ be a reticulation node of $G$ with parents $u, w$ such that $(u, v)$ is a support edge and $(w, v)$ is a transfer edge.
Suppose that there is $c_j \in \C$ such that $c_j \notin l(u), c_j \notin l(w)$, but $c_j \in l(v)$.  
It follows from Theorem~\ref{thm:connect} that $v$ is the unique node of in-degree $0$ in $G[V_{c_j}(l)]$.  
Consider the labeling $l'$ obtained by taking $l$, but applying $l'(v) = l(v) \setminus \{c_j\}$.
Let $v'$ be the unique child of $v$.  One can easily see that $G[V_{c_j}(l')]$ is connected and that $v'$ is its unique node of in-degree $0$.  Moreover, $\sup{G}[\ov{V}_{c_j}(l')]$ is the same as $\sup{G}[\ov{V}_{c_j}(l)]$ but with $v$ added.  It is still connected since $v$ is a child of $u \in \ov{V}_{c_j}(l)$, and it still contains the root.  
Hence by Theorem~\ref{thm:connect}, $l'$ also explains $\S$.  

By applying the above argument to every character, we obtain a labeling $l$ that explains $\S$ such that for any character $c_j$, the first-appearance of $c_j$ under $l$ does not occur at a reticulation node.  Assume that $l$ has this property.

Consider the binary $\C$-labeling $f$ of $G$ in which, for each $v \in V(G)$, 
we put $f(v)[j] = 1$ if and only if $c_j \in l(v)$.  
We claim that $f$ explains $\S$ with $m$-crossovers, where $m = |\C|$.  
First consider a reticulation node $v$ with parents $u$ and $w$, where $(u, v)$ is a support edge and $(w, v)$ is a transfer edge.  We must argue that $f(v)$ is an $m$-crossover of $f(u)$ and $f(w)$.
Because under $l$, no character first-appears on a reticulation, we have that for each $c_j \in l(v)$, either $c_j \in l(u)$ or $c_j \in l(w)$.
Moreover, for each $c_j \notin l(v)$, we must have $c_j \notin l(u)$ (by the \emph{never lost once acquired} condition).
In terms of $f$, this means that for each $j \in [m]$,
if $f(v)[i] = 1$ then one of $u$ or $w$ also has a $1$ in position $j$, and if $f(v) = 0$, then $f(u)[j] = 0$. 
It follows that $f(v)$ can be obtained with at most $m$ crossovers from its parents.

Next, we argue that for each character $c_j$, there is at most one tree edge 
$(u, v)$ that flips $c_j$ from $0$ to $1$ under $f$.  If there were two such edges, then under $l$, there would be two trees nodes that possess $c_j$ but not their (unique) parent.  Thus $G[V_{c_j}(l)]$ would have two nodes of in-degree $0$, contradicting Theorem~\ref{thm:connect}.
Third, the condition that characters are not lost after being acquired implies that, for every tree edge $(u, v)$, 
$f(v)$ can be obtained from $f(u)$ by flipping $0$s to $1$s, but not vice-versa (this follows from the fact that tree edges are support edges).
Thus, $f$ explains $\S$ and we conclude that $(G, \s) \in ARG_{\infty}(\S)$.
\end{proof}

\section*{Algorithmic Problems}

We now study the algorithmic aspects of recognition, completion, and reconstruction of PTNs.

\subsection*{Recognizing perfect transfer networks}
The first problem in this section is analogous to the perfect phylogeny problem, where we must find a labeling of all the inner nodes so that the network correctly represents the evolution of a given set of species.  This is not as trivial as in the tree case, since their may be multiple options for the originator of a character $c \in \C$.

\medskip 

\noindent 
The {\sf PTN recognition} problem\\
\textbf{Input.}  A set of taxa $\S$ on characters $\C$ and a tree-based network $G = (V, E_S \cup E_T)$ with $\S$-map $\sigma$. \\
\textbf{Output.} A $\C$-labeling of $G$ that explains $\S$, if one exists.

\vspace{4mm}

Importantly, the above problem formulation lets us assume that we have knowledge of support and transfer edges.  
We first need some intermediate results (recall that $\sup{G}$ is the support tree of $G$).

\begin{lemma}\label{lem:Fzero}
Let $G = (V,E_S \cup E_T)$ be a tree-based network with $\S$-map $\s$, and let $l$ be a $\C$-labeling of $G$ that explains $\S$. 
Let $u \in \ov{V}_c(l)$.  Then for each ancestor $v$ of $u$ in $\sup{G}$, we have $v \in \overline{V_c}(l)$ as well.
\end{lemma}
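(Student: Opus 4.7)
The plan is to prove this as a straightforward consequence of the ``never lost once acquired'' clause in Definition~\ref{def:hgt-explain}. The key observation is that the support tree $\sup{G}$ consists exactly of the support edges in $E_S$, so every edge along a path in $\sup{G}$ satisfies the propagation property $c \in l(x) \Rightarrow c \in l(y)$ for $(x,y) \in E_S$. Taking the contrapositive, $c \notin l(y) \Rightarrow c \notin l(x)$, which means that missing $c$ propagates \emph{upward} along every support edge.

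Concretely, I would proceed by induction on the length $k$ of the directed path from $v$ to $u$ in $\sup{G}$. For $k = 0$ we have $v = u \in \ov{V}_c(l)$ and there is nothing to show. For the inductive step, let the path from $v$ to $u$ be $v = w_0, w_1, \ldots, w_k = u$. By the induction hypothesis applied to the child $w_1$ of $v$ (which is an ancestor of $u$ via a shorter path of length $k-1$), we already know $w_1 \in \ov{V}_c(l)$. Now $(v, w_1) = (w_0, w_1)$ is a support edge of $G$, so by the ``never lost once acquired'' condition of Definition~\ref{def:hgt-explain}, $c \in l(v)$ would force $c \in l(w_1)$, contradicting $w_1 \in \ov{V}_c(l)$. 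Hence $v \in \ov{V}_c(l)$, completing the induction.

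There is no real obstacle here: the statement is essentially a reformulation of the support-edge monotonicity already embedded in Definition~\ref{def:hgt-explain}, and in fact this exact observation is invoked inline in the $(\Rightarrow)$ direction of the proof of Theorem~\ref{thm:connect}. The only minor subtlety is to remember that $\sup{G}$ may contain subdivision nodes, but since each edge of $\sup{G}$ is still a member of $E_S$, the inductive argument applies uniformly to all of them, including those incident to subdivision nodes.
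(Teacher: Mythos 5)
Your proof is correct and follows essentially the same route as the paper: both arguments just push the ``never lost once acquired'' condition along the unique path from $v$ to $u$ in $\sup{G}$, the paper phrasing it as a direct contradiction and you as an induction on path length using the contrapositive. No gap to report.
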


\begin{proof}
Suppose that $u \in \ov{V}_c(l)$ has an ancestor $v$ in $\sup{G}$ such that $v \in V_c(l)$. Consider the unique path of $\sup{G}$ from $v$ to $u$, namely the sequence $(v,v_1, v_2,\dots , v_k, u)$. Due to the \emph{never lost once acquired} requirement of Definition~\ref{def:hgt-explain} $\{v_1,v_2, \dots , v_k\} \subseteq V_c(l)$. In this way the edge $(v_k,u)$ will represent the loss of character $c$, a contradiction.
\end{proof}

In other words, any node that has at least one descendant in $\overline{V_c}(l)$ should be in $\overline{V_c}(l)$ in every possible $\C-$labeling $l$ that explains $\S$. 
Since we must have $l(v) = \s(v)$ for each leaf $v$, we can already deduce that if $v$ is a leaf such that $c \notin \s(v)$, then no ancestor of $v$ in the support tree can possess $c$. This is a property similar to character state changes in the Camin-Sokal parsimony method~\cite{inferringphylos}. We thus define the following subset of nodes that are forced to not have $c$:
\begin{equation*}
    F_c(G, \s) = \{v \in V(G) : \exists w \in L(\sup{G}(v)) \mbox{ such that $c \notin \sigma(w)$ } \}
\end{equation*}

If $G$ and $\s$ are clear from the context, we will write $F_c$ instead of $F_c(G, \s)$.
Recall that for a network $G$, $R_v(G)$ denotes the set of nodes reachable from $v$.  The following characterization of PTNs will let us recognize them easily.

\begin{lemma}\label{lem:g-fc}
Let $G$ be a tree-based network with $\S$-map $\s$. Then $(G, \s)$ is a perfect transfer network for $\S$ if and only if for every character $c \in \C$, $G-F_c$ contains a node $v$ such that $R_v(G - F_c)$ contains every leaf in $L(G - F_c)$.  
\end{lemma}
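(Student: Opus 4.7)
The plan is to prove both directions by passing through the equivalent characterization in Theorem~\ref{thm:connect}, since the connectedness formulation meshes well with reachability in $G - F_c$.

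For the forward direction, I would start with a $\C$-labeling $l$ explaining $\S$, fix $c \in \C$, and first show that $V_c(l) \cap F_c = \emptyset$. This follows from the contrapositive of Lemma~\ref{lem:Fzero}: if $u \in V_c(l)$, then every $\sup{G}$-descendant of $u$, including every leaf of $\sup{G}(u)$, must also be in $V_c(l)$, which rules out $u \in F_c$. Consequently $V_c(l) \subseteq V(G - F_c)$. I would then observe that the leaves of $G - F_c$ are exactly the leaves $w$ of $G$ with $c \in \s(w)$, hence lie in $V_c(l)$. Taking $v$ to be the unique in-degree-$0$ node of $G[V_c(l)]$ guaranteed by Theorem~\ref{thm:connect}, any directed path in $G[V_c(l)]$ is also a directed path in $G - F_c$, so $R_v(G - F_c) \supseteq V_c(l)$ contains every leaf of $G - F_c$.

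For the backward direction, given witnesses $v_c$ for each $c \in \C$, I would define the candidate labeling by $l(u) = \{\,c \in \C : u \in R_{v_c}(G - F_c)\,\}$ and then verify the three clauses of Theorem~\ref{thm:connect}. The leaf clause is direct from the characterization of the leaves of $G - F_c$ above. For the single-origin clause, since $V_c(l) = R_{v_c}(G - F_c)$, any path from $v_c$ in $G - F_c$ stays inside $V_c(l)$, so $G[V_c(l)]$ is connected; acyclicity of $G$ together with the fact that $v_c$ reaches every other node of $V_c(l)$ forces $v_c$ to be the unique node of in-degree $0$.

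The main obstacle, and the part I would handle most carefully, is the third clause: $\sup{G}[\ov{V}_c(l)]$ must be connected and contain $\rho(G)$. The plan is to show the ancestor-closed property: if $u \in \ov{V}_c(l)$ and $u'$ is a $\sup{G}$-ancestor of $u$, then $u' \in \ov{V}_c(l)$. I would split on whether $u \in F_c$ (in which case $u' \in F_c$ too, because every leaf descendant of $u$ in $\sup{G}$ is a leaf descendant of $u'$) or not. In the latter case, if also $u' \notin F_c$, the key observation is that every node $x$ on the $\sup{G}$-path from $u'$ to $u$ has $u$ in its subtree, so $L(\sup{G}(x)) \subseteq L(\sup{G}(u))$ consists entirely of leaves possessing $c$, giving $x \notin F_c$. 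Hence this whole $\sup{G}$-path lies in $V(G - F_c)$, so if $v_c$ reached $u'$ in $G - F_c$ it would also reach $u$, contradicting $u \in \ov{V}_c(l)$. Once this ancestor-closure is in hand, $\sup{G}[\ov{V}_c(l)]$ contains the entire $\sup{G}$-path from $\rho(G)$ to each of its nodes, which yields both connectedness and containment of the root, completing the application of Theorem~\ref{thm:connect}.
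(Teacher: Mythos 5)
Your proposal takes essentially the same route as the paper: both directions pass through Theorem~\ref{thm:connect}, the forward direction uses Lemma~\ref{lem:Fzero} to place $V_c(l)$ inside $G-F_c$ and takes the single origin as the witness, and the backward direction constructs exactly the labeling $c \in l(u) \Leftrightarrow u \in R_{v_c}(G-F_c)$ and verifies the three clauses (the paper proves the third clause by contradiction from a hypothetical disconnection of $\sup{G}[\ov{V}_c(l)]$, you via ancestor-closure of $\ov{V}_c(l)$ in $\sup{G}$ --- the same argument in slightly different packaging).

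One step is misstated, precisely in the part you flagged as delicate. For a node $x$ on the $\sup{G}$-path from $u'$ to $u$ you assert $L(\sup{G}(x)) \subseteq L(\sup{G}(u))$; the inclusion runs the other way, since $x$ is an ancestor of $u$, and being an ancestor of a node outside $F_c$ guarantees nothing (the parent of a leaf possessing $c$ may well lie in $F_c$ because of its other child). The conclusion you need, $x \notin F_c$, is still true, but it should be derived from $u'$ rather than $u$: every such $x$ is a descendant of $u'$ in $\sup{G}$, so $L(\sup{G}(x)) \subseteq L(\sup{G}(u'))$, and $u' \notin F_c$ means every leaf below $u'$ possesses $c$. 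With that substitution the ancestor-closure claim, and hence the whole argument, goes through and matches the paper's proof.
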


\begin{proof}
$(\Rightarrow)$ Let $l$ be a $\C-$labeling of $G$ that explains $\S$, and let $c \in \C$.  
By Lemma~\ref{lem:Fzero}, we must have $F_c \subseteq \ov{V}_c(l)$.  The resulting graph $G-F_c$ thus contains all the nodes in $V_c(l)$. 
Due to the single origin requirement of our definition, $G - F_c$ contains a node $v$ that is able to reach all the leaves in $V_c(l)$, which includes all the leaves of $G - F_c$.

$(\Leftarrow)$ 
To show the converse,  
we build a $\C-$labeling $l$ in the following way: for every $c \in \C$, let $v$ be a node such that its reachable set in $G-F_c$ contains every leaf in $L(G-F_c)$, and let us call $v$ the \emph{origin} of $c$.  Then for $u \in V(G)$, we put $c \in l(u)$ if and only if $u \in R_v(G-F_c)$.  That is, the origin of $c$ is $v$ and it is transmitted to every node that $v$ reaches in $G - F_c$. 
We claim that $l$ satisfies all the conditions of Theorem \ref{thm:connect}.

Let $u \in L(G)$ and let us first argue that $l(u) = \s(u)$.
Let $c \in \C$ and let $v$ be the chosen origin of $c$ in $l$.  If $c \notin \s(u)$, then $u \in F_c$ and $v$ does not reach $u$ in $G - F_c$ (because $u$ is simply not in $G - F_c$), and by our construction, $c \notin l(u)$.  If $c \in \s(u)$, then $u \notin F_c$ and $v$ reaches $u$ in $G - F_c$, in which case we put $c \in l(u)$.  It follows that $l(u) = \s(u)$.

We now argue on the connectedness requirements of the $G[V_c(l)]$ subgraphs.  Again, let $c \in \C$ and let $v$ be the chosen origin of $c$ in $l$.
Since $V_c(l)$ consists of nodes reachable from $v$, it is evident that $v$ will be the only node in $G[V_c(l)]$ whose in-degree is $0$.
It is also evident that $(G - F_c)[V_c(l)]$ is connected, since $V_c(l)$ only contains nodes reachable from $v$ in $G - F_c$.  This implies that $G[V_c(l)]$ is also connected, since we cannot disconnect the $V_c(l)$ subgraph by putting back the nodes of $F_c$.

It remains to argue the third condition of Theorem~\ref{thm:connect}.  First assume that the origin $v$ of $c$ is equal to $\rho(G)$.  Then every leaf must possess $c$, as otherwise if some leaf $u$ would satisfy $c \notin \s(c)$, then $u \in F_c$ and the root would also be in $F_c$, by its definition.  It follows that $F_c$ is empty.  Then our construction puts $V_c(l) = V(G)$ since $\rho(G)$ reaches every node in $G - F_c = G$.  In this case, the third condition of the theorem holds.  So we may assume that $v \neq \rho(G)$.  In this case, $v$ does not reach $\rho(G)$ in $G - F_c$ since no node reaches the root (except the root itself).  Thus we may assume that $\ov{V}_c(l) \neq \emptyset$ and contains the root. 
Suppose for a contradiction that $\sup{G}[\ov{V_c}(l)]$ is disconnected, i.e. there exist some node $u \in \ov{V}_c(l)$ that $\rho(G)$ cannot reach in $G[\ov{V}_c(l)]$. Then in $\sup{G}$, $u$ has an ancestor $w \in V_c(l)$, which implies that the origin, $v$, can reach $w$ in $G-F_c$. 
This implies that $w \notin F_c$, which in turn implies that no node  from $w$ to $u$ in $\sup{G}$ can be in $F_c$.  
This means that the path from $w$ to $u$ exists in $G-F_c$.  Thus in $G - F_c$, $v$ reaches $u$ through $w$, and so our construction of $l$ would have put $c\in l(u)$, a contradiction. Hence, $\sup{G}[\ov{V_c}(l)]$ is connected and contains $\rho(G)$.
\end{proof}

The proof of the previous lemma implies the following verification algorithm:

\vspace{3mm}

\begin{algorithm}[h]
\DontPrintSemicolon
\SetKwProg{Fn}{function}{}{}
\Fn{findLabeling($G, \s, \S$)}
{
    Set $l(v) = \s(v)$ for every leaf $v \in L(G)$.\;
    Set $l(v) = \emptyset$ for all $v \in V(G) \setminus L(G)$.\;       
    \While{$l$ not NULL and $\C$ is not empty}
    {
        Pick a character $c \in \C$.\;
        Compute $F_c(G, \s)$.\;
        Compute the graph $\widehat{G_c} = G - F_c(G, \s)$\;
        \eIf{$\widehat{G_c}$ is not connected}
        {Set $l = $ NULL.\;}
        {
        Let $X$ be the set of all nodes in $\widehat{G_c}$ with in-degree 0.\;
        \eIf{ there is $v \in X$ such that $L(\widehat{G_c}) \subseteq R_v(\widehat{G_c})$ \label{line:in-deg-zero}
        }{
            Set $l(u) = l(u) \cup \{c\}$ for all $u$ in $R_v(\widehat{G_c})$.\;
        }{
            Set $l =$ NULL.\;
        }
        }
        Remove $c$ from $\C$
    }
    return $l$.\;
}

  \caption{Check if a given tree-based network $G$ explains $\S$.}
  \label{alg:explain}
\end{algorithm}

\begin{theorem}\label{thm:display}
Algorithm \ref{alg:explain} correctly solves the {\sf PTN recognition} problem in time $O(|\C||V(G)|^2)$.
\end{theorem}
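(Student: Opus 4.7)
The plan is to use Lemma~\ref{lem:g-fc} as the bridge between Algorithm~\ref{alg:explain} and the definition of a PTN: the loop body on character $c$ implements exactly the origin-based labeling construction from the $(\Leftarrow)$ direction of that lemma. I will prove correctness in both directions and then bound the running time.

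For the forward direction, I would show that whenever the algorithm returns a labeling $l \neq $ NULL, $l$ explains $\S$. On each character $c$, the loop finds a node $v$ of in-degree $0$ in $\widehat{G_c}$ whose reachable set contains $L(\widehat{G_c})$ and sets $c \in l(u)$ exactly for $u \in R_v(\widehat{G_c})$. Combined with the initialization $l(v) = \s(v)$ at leaves, this matches the construction in the $(\Leftarrow)$ proof of Lemma~\ref{lem:g-fc} verbatim, so the accumulated $l$ satisfies the three conditions of Theorem~\ref{thm:connect} and hence explains $\S$.

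For the converse, I must argue that if $(G, \s)$ is a PTN, the algorithm never sets $l$ to NULL. By Lemma~\ref{lem:g-fc}, for each character $c$ there exists at least one node of $\widehat{G_c}$ whose reachable set covers $L(\widehat{G_c})$. The main obstacle is that line~\ref{line:in-deg-zero} only inspects the set $X$ of nodes of in-degree $0$ in $\widehat{G_c}$, so I would show that this restriction is harmless: starting from any valid source $v$ and walking backward along in-edges of $\widehat{G_c}$ (which is acyclic, being an induced subgraph of the DAG $G$) must terminate at some $v' \in X$, and $v'$ still reaches everything $v$ reaches, so $L(\widehat{G_c}) \subseteq R_{v'}(\widehat{G_c})$. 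For the preliminary connectedness check, I would observe that every node $u \in \widehat{G_c}$ has a directed path in $\widehat{G_c}$ to a leaf through the support tree: any node on the support-tree path from $u$ to a leaf of $L(\sup{G}(u))$ has its own support-tree leaves contained in $L(\sup{G}(u))$, all of which carry $c$, hence lies outside $F_c$. Thus every weak component of $\widehat{G_c}$ contains a leaf, and if $\widehat{G_c}$ were disconnected, no single source could reach leaves in other components, contradicting the PTN assumption. So rejection at the connectedness step cannot occur when $(G, \s)$ is a PTN.

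For the running time, the outer loop iterates at most $|\C|$ times. Within one iteration, $F_c$ is obtained by a single postorder traversal of the support tree in $O(|V(G)|)$; since $G$ is binary we have $|E(G)| = O(|V(G)|)$, so constructing $\widehat{G_c}$, testing undirected connectedness, and extracting $X$ each cost $O(|V(G)|)$. The dominant step is line~\ref{line:in-deg-zero}: for each candidate $v \in X$, verifying $L(\widehat{G_c}) \subseteq R_v(\widehat{G_c})$ is a single BFS/DFS in $O(|V(G)|)$, and there are at most $|V(G)|$ candidates, giving $O(|V(G)|^2)$ per character. Summing over $\C$ yields the claimed $O(|\C|\,|V(G)|^2)$ bound.
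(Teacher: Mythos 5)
Your proposal is correct and takes essentially the same route as the paper's proof: both reduce correctness to Lemma~\ref{lem:g-fc}, justify restricting line~\ref{line:in-deg-zero} to the in-degree-$0$ nodes by walking backward along in-edges of the acyclic graph $\widehat{G_c}$ to a source whose reachable set is at least as large, and use the same per-character $O(|V(G)|^2)$ accounting. Your additional argument that the connectedness test can never falsely reject when $(G,\s)$ is a PTN (every weak component of $\widehat{G_c}$ contains a leaf, so a universal source forces connectedness) addresses a step the paper leaves implicit, and is a nice extra.
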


\begin{proof}
We first argue that the algorithm is correct.  By Lemma~\ref{lem:g-fc}, it suffices to check, for every $c \in \C$, that some node $v$ reaches every leaf in $G - F_c$.  Moreover, when this is the case, the proof of Lemma~\ref{lem:g-fc} shows that we can add $c$ to every node in $R_v(G - F_c)$ to obtain a labeling that explains $\S$.
If the algorithm finds such a $v$, this is exactly the labeling that it applies.  So we must argue that when such a $v$ exists, the algorithm will find it.
 
For that, we claim that the verification in line~\ref{line:in-deg-zero} of Algorithm \ref{alg:explain} is enough to find the node required by Lemma \ref{lem:g-fc}.  That is, we do not need to check every node $v$ of $\widehat{G_c} = G - F_c$, but only those in the set $X$, i.e. the nodes of in-degree $0$.
Suppose that there is $y \in V(\widehat{G_c})$ that reaches every leaf in $L(\widehat{G_c})$.  If $y \in X$, we are done.  Otherwise, because $\widehat{G_c}$ is acyclic, there must be $v \in X$ that reaches $y$ (it can be found by following by iteratively following in-neighbors starting from $y$ until such a node is reached).  It follows that $R_y(\widehat{G_c}) \subseteq R_v(\widehat{G_c})$ and that $v$ also reaches every leaf.
Thus, it suffices to check every source in $\widehat{G_c}$.

Let us now argue the complexity.
For a character $c \in \C$, computing $F_c$ can be done in a post-order traversal of $\sup{G}$ in time $O(|V(G)|)$. Checking whether the resulting network is connected or not can be done in linear time too. Finding the nodes of in-degree $0$ takes linear-time and, for each $O(|V(G)|)$ of these nodes, computing $R_v(\widehat{G_c})$ takes time $O(|V(G)|)$. Thus each character requires time $O(|V(G)|^2)$, and thus our algorithm solves the problem in $O(|\C||V(G)|^2)$.
\end{proof}

\subsection*{The tree-completion problem}

We now turn to the problem of predicting transfer locations on a given species tree.
As mentioned in \cite{Pons2018}, 
species trees depict vertical inheritance and thus serve as basis for our networks, and HGT events can be seen as additional evolutionary events that happen on the way.
This motivates the feasibility question: given a set of taxa $\S$ and a species tree $T$ on $\S$, can we add transfer arcs to $T$ so that the resulting network explains $\S$?  Moreover, can we do this so that the resulting network is time-consistent?
One possibility is that a \emph{universal tree-based network}, which contains all phylogenetic trees on a given set of $n$ leaves~\cite{bordewich2018universal}, could explain any set of characters.  However, a base tree needs to be specified in our model, and it is not clear that such a choice always exists in a universal network.

Nevertheless, it turns out that there is no feasibility problem.  Indeed, it is relatively easy to show that, for a set of taxa $\S$, any tree $T$ on $\S$ can be complemented with transfers to become a PTN.
One way to achieve this is to add a transfer edge between the parent edge of every pair of leaves, from which point it can be argued that the network is a PTN.
In fact, we can show a stronger statement: if $T$ is ``pre-labeled'' in any way that acquired characters are never lost, then we can add transfers to $T$ to explain $\S$ while preserving the given pre-labeling.

Formally, for a tree $T$, a $\C$-labeling $\lambda$ of $T$ is called a \emph{no-loss $\C$-labeling} if, for any edge $(u, v) \in E(T)$ and any $c \in \C$, it holds that $c \in l(u)$ implies $c \in l(v)$.
Recall that if $G$ is a tree-based network, then the base tree $T$ of $G$ is obtained by removing transfer arcs and suppressing subdivision nodes.  Hence, $V(T) \subseteq V(G)$, and we use $V(G) \cap V(T)$ to explicitly refer to the nodes of $G$ that are also in the base tree.
We have the following problem.

\vspace{3mm}

\noindent 
The {\sf Pre-labeled Tree Completion} problem\\
\textbf{Input.} A set of taxa $\S$ on characters $\C$, a tree $T$ with $\S$-map $\s$, and a no-loss $\C-$labeling $\lambda$ of $T$ such that $\lambda(v) = \s(v)$ for each $v \in L(T)$.\\
\textbf{Output.}  A time-consistent tree-based network $G$ such that $T$ is the base tree of $G$, and a $\C$-labeling $l$ of $G$ that explains $\S$ and such that $l(v) = \lambda(v)$ for every $v \in V(G) \cap V(T)$.

\vspace{3mm}

We will now prove that this is always possible, even with the time-consistency constraint.

One interest of allowing a pre-labeling is that one can start with any hypothesis on where the characters appeared on a tree, and transfers can explain that hypothesis.  
Perhaps the most natural pre-labeling is the Fitch-like labeling where, for each character $c$, we add $c$ in every maximal subtree whose leaves have the character.
To be more specific, if $v$ is a leaf, $\lambda(v) = \s(v)$ and otherwise, let $u,w$ be the children of $v$, then $\lambda(v) = \lambda(u) \cap \lambda(w)$.
This corresponds to the reasonable hypothesis that characters always appear at maximal clades that have it, and we provide and algorithm that can explain this.  Again, this is one example of a possible pre-labeling, and our algorithm can explain any other that has the no-loss property, even if characters are again after the lowest common ancestor of the leaves that have this character.

We define a transfer operation between two nodes $u$ and $v$ on a tree-based network $G$.  We write $G\trans{u,v}$ to denote the tree-based network obtained by subdividing the respective incoming edges of $u$ and $v$ in $\sup{G}$, thereby creating new parent $u'$ for $u$ and $v'$ for $v$, and adding the transfer edge $(u',v')$.

It is important to point out that in a no-loss labeling, there can be multiple ancestral species that acquire a specific character that for the first time.  In our algorithm, this property will also be maintained in the constructed network, and we will use the following notion:

\begin{definition}
Let $G$ be a tree-based network with a no-loss $\C-$labeling $\lambda$. Let $(u, v)$ be an edge of $\sup{G}$.
We will say that v is a \textbf{first-appearance node for} $c$ under $\lambda$ if it holds that $u \in \ov{V_c}(l)$ and every descendant of $v$ in $\sup{G}$ belongs to $V_c(l)$.
\end{definition}

We may say \emph{first-appearance for $c$} if $\lambda$ is clear from the context.
Note that if $\lambda$ is a no-loss labeling,  a first-appearance node for $c$ cannot have a first-appearance node for $c$ as a descendant, and hence first-appearance nodes are pairwise incomparable.

We can now describe our algorithm.  The first step is to make $G$ a copy of the given tree $T$, and then we add transfer edges to $G$.
Note that in our problem, the given tree has no time map, and deciding where to put the transfers on $T$ in a time-consistent manner becomes surprisingly complicated.  For this reason, before adding any transfer, we start by constructing a time consistent map $\tau$ for $G$.  It is easy to do this in such a way that no two vertices have the same time.

From that point, we look at each character $c$ and their set of first-appearance nodes $a_1, \ldots, a_k$, ordered in decreasing order of age, and we greedily connect them using transfers.  The $\tau$ that we constructed dictates the order of connections, in the sense that each $a_i$ is assumed to transfer $c$ to the younger $a_{i+1}$ node.  This is achieved by finding a descendant of $a_i$ that could have co-existed between $a_{i+1}$ and its parent.  The $\tau$ map is also used to assign a time to the nodes created by transfers.

\begin{algorithm}[h]
\DontPrintSemicolon
\SetKwProg{Fn}{function}{}{}
\Fn{TransferAdditionGreedy($T,S, \lambda$)}
  {
  Let $G = (V, E_S \cup E_T)$ be the tree-based network such that $V(G) = V(T)$, $E_S= E(T)$ and $E_T=\emptyset$.\;
  Let $l$ be the $\C$-labeling in which $l(v) = \lambda(v)$ for all $v \in V(G)$.\;  
  Let $\tau$ be a time-consistent map for $G$ in which every leaf $v$ has $\tau(v) = 0$ and every internal node has a distinct time.\;
  
  \For{ $c \in \C$
  }{
  Compute $A_c$, the set of first-appearance nodes of character $c$ in $\sup{G}$.\;
  Let $X_c= (a_1, a_2,\dots, a_k)$ be the ordering of $A_c$ such that $\tau(a_i) \geq \tau(a_{i+1})$ for all $i \in [k - 1]$.\;
  \For{$i\in [k-1]$}{
    Let $a'_i$ be the parent of $a_i$ in $\sup{G}$\;
    Let $a_{i + 1}'$ be the parent of $a_{i+1}$ in $\sup{G}$.\;
    \If{$a'_i$ 
    has no descendant $w$ in $\sup{G}$ such that $(w, a_{i+1}') \in E_T$}
    {
        Look for $(w',w) \in E(\sup{G}(a'_i))$ such that $\tg{w'} > \tg{a_{i+1}}$ and $\tg{w} \leq \tg{a_{i+1}}$\;
       
        Apply the transformation  $G\trans{w,a_{i+1}}$.\;\label{line:transformation}
        Let $\hat{w}$ and $\hat{a}_{i+1}$ be the new parents of $w$ and $a_{i+1}$, respectively, in $\sup{G}$\;
        
        Set $l(\hat{w}) = (l(w) \cap l(w')) \cup \{c\}$ and $l(\hat{a}_{i+1})=(l(a_{i+1}') \cap l(a_{i+1}))  \cup \{c\}$.\;\label{line:label-cap}
        
        Set $\tau(\hat{w}) = \tau(\hat{a}_{i+1}) = \dfrac{\min(\tau(w'), \tau(a'_{i+1})) + \tau(a_{i+1})}{2}$ \label{line:transfer-time}\;
    }
    \Else 
    {
        Add $c$ to $l(w)$ and $l(a'_{i+1})$ if not already present, where $w$ is a descendant of $a'_i$ in $\sup{G}$ such that $(w, a'_{i+1}) \in E_T$\;
    }
  }
  }
  return($G, l$)
}
  \caption{Place an edge between all the first-appearance trees.}
  \label{alg:greedy}
\end{algorithm}

\vspace{4mm}

\begin{figure}[h]
    \centering
    \includegraphics[width=0.9\textwidth]{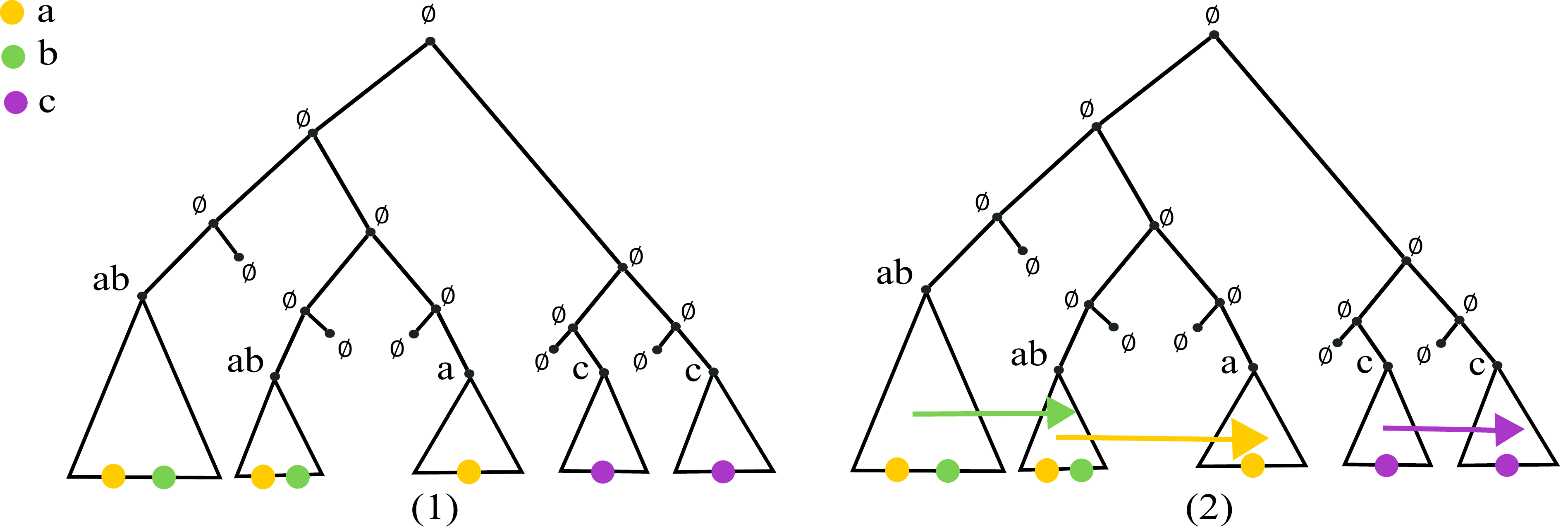}
    \caption{An example of a solution that will be given by our greedy algorithm (1) The given instance $T$ with $\C = \{a,b,c\}$. Triangles represent subtrees and colors represent the characters that can be found on them (2) One possible solution output by Algorithm~\ref{alg:greedy}. The green arrow joins the first two subtrees that contain $\{a,b\}$, the yellow arrows joins the second subtree with the subtree that contains only $\{a\}$ and the final arrow connects two subtrees that contain only $\{c\}$.}
    \label{fig:greedy}
\end{figure}

\begin{lemma}\label{lem:timec}
The network returned by Algorithm~\ref{alg:greedy} is time-consistent under the time map $\tau$.
\end{lemma}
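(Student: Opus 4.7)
The plan is to induct on the number of executions of $G\trans{w, a_{i+1}}$ on line~\ref{line:transformation}. The base case is immediate: just after $\tau$ is built, $G = T$ carries no transfer edges and $\tau$ is time-consistent on $T$ by construction. The ``else'' branch of Algorithm~\ref{alg:greedy} only updates the labeling $l$, leaving the network and $\tau$ untouched, so it trivially preserves time-consistency and can be ignored.

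For the inductive step, assume $\tau$ is time-consistent for the current $G$ and consider one application of $G\trans{w, a_{i+1}}$. This operation subdivides $(w', w)$ and $(a'_{i+1}, a_{i+1})$ in $\sup{G}$, creating two new nodes $\hat{w}, \hat{a}_{i+1}$, a new transfer edge $(\hat{w}, \hat{a}_{i+1})$, and four new support edges $(w', \hat{w})$, $(\hat{w}, w)$, $(a'_{i+1}, \hat{a}_{i+1})$, $(\hat{a}_{i+1}, a_{i+1})$. Since no existing node's $\tau$-value is modified, every previously time-consistent edge remains so, and the verification reduces to these five new edges.

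For the transfer edge, line~\ref{line:transfer-time} sets $\tau(\hat{w}) = \tau(\hat{a}_{i+1}) = M$, where $M := \tfrac{\min(\tau(w'), \tau(a'_{i+1})) + \tau(a_{i+1})}{2}$, so the required equality is immediate. For the four support edges, I would use three inequalities guaranteed at this iteration: (i) $\tau(w') > \tau(a_{i+1})$ and (ii) $\tau(w) \leq \tau(a_{i+1})$ from the selection guard for $(w', w)$; and (iii) $\tau(a'_{i+1}) > \tau(a_{i+1})$ from the induction hypothesis, since $a'_{i+1}$ is the parent of $a_{i+1}$ in the current $\sup{G}$. A short case split on which of $\tau(w'), \tau(a'_{i+1})$ realizes the minimum then yields $\tau(w') > M > \tau(w)$ and $\tau(a'_{i+1}) > M > \tau(a_{i+1})$. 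The only slightly delicate inequality is $M > \tau(w)$, which follows from combining (i) and (iii) to get $\min(\tau(w'), \tau(a'_{i+1})) > \tau(a_{i+1}) \geq \tau(w)$, and hence $M$ strictly exceeds $\tau(w)$.

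The main subtlety I anticipate is that in later iterations, the edge $(w', w)$ may itself be the product of an earlier subdivision, so $w'$ or $w$ may be a previously-inserted ``hatted'' node with a mid-point time rather than a node of the original $T$. This is handled by the invariant preserved by the induction: $\tau$ strictly decreases along every directed path in $\sup{G}$, regardless of whether the nodes along that path come from $T$ or from previous subdivisions. With this invariant stated explicitly, inequality (iii) continues to hold for whatever $a'_{i+1}, a_{i+1}$ appear at the moment of the current subdivision, and the inductive step closes; iterating over all $c \in \C$ and all indices $i$ then yields time-consistency of the final network.
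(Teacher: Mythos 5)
Your inductive setup and the verification of the five new edges coincide with the paper's proof: only the new support edges and the new transfer edge need checking, and the inequalities $\tau(w') > \tau(\hat{w}) > \tau(w)$ and $\tau(a'_{i+1}) > \tau(\hat{a}_{i+1}) > \tau(a_{i+1})$ follow exactly as you derive them from $\min(\tau(w'),\tau(a'_{i+1})) > \tau(a_{i+1}) \geq \tau(w)$. That part is correct and is essentially the paper's argument.

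The genuine gap is that you take the guard conditions (i) and (ii) for granted: you never show that an edge $(w',w)$ of $\sup{G}(a'_i)$ with $\tau(w') > \tau(a_{i+1})$ and $\tau(w) \leq \tau(a_{i+1})$ actually \emph{exists}, i.e.\ that the search preceding line~\ref{line:transformation} always succeeds. Without this, your argument only shows that \emph{if} each transformation is applicable then the result is time-consistent; it does not establish that the algorithm is well-defined and returns a network at all, which the lemma (and the later correctness proof) implicitly requires. The paper devotes the first half of its inductive step to exactly this claim, and the argument is not free: it uses the ordering $\tau(a_i) \geq \tau(a_{i+1})$ of the first-appearance nodes, the induction hypothesis of time-consistency (to get $\tau(a'_i) > \tau(a_i)$, handling the case $\tau(a_i) = \tau(a_{i+1})$ by taking $(w',w) = (a'_i, a_i)$), and the fact that all leaves have time $0$, so that walking down from $a_i$ one must cross the threshold $\tau(a_{i+1})$ at some support edge. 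Note that this existence argument genuinely belongs inside the same induction you are running, since it needs time-consistency of the current network; adding it closes the gap, and your invariant that $\tau$ strictly decreases along directed paths of the support tree is precisely the hypothesis it uses.
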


\begin{proof}
We argue that at any moment during the execution of the algorithm, $\tau$ is a time-consistent map of $G$.
We prove this by induction on the number of iterations undertaken.
Notice that as a base case, the statement is initially true before entering the main \emph{for} loop.
Now assume that we have inserted $j - 1$ transfer edges and that $\tau$ is time-consistent for $G$ after these insertions.
Consider the $j$-th transfer edge inserted into $G$.
This transfer edge is $(\hat{w}, \hat{a}_{i+1})$, which are created between $w$ and its parent $w'$ in the support tree $T'$, and between $a_{i+1}$ and its parent $a'_{i+1}$ in $T'$, respectively.

We first claim that $(w', w)$ as used in the algorithm exists in the subtree $T(a_i)$, where $a_i$ is the node that precedes $a_{i+1}$ in $X_c$. We know that $\tau(a_i) \geq \tau(a_{i+1})$. Suppose that $\tau(a_i) = \tau(a_{i+1})$. Let $a_i'$ be the parent of $a_i$, then by induction hypotesis, since the network is time consistent we have that $\tau(a_i') > \tau(a_i) = \tau(a_{i+1})$ and so $w' = a_i'$ and $w = a_i$. In this case, we do have $\tau(w') > \tau(a_{i+1})$ and $\tau(w) \leq \tau(a_{i+1})$.  Now, suppose that $\tau(a_i) > \tau(a_{i+1})$. Note that we can always order the vertices of $T(a_i)$ with respect to $\tau$ in such a way that $\tau(a_i) \geq v$ for all $v \in V(T(a_i))$. By induction hypothesis, every time we pick a descendant $y$ of $a_i$, $\tau(y) < \tau(a_i)$, so $w$ can be found by iteratively following the descendants of $a_i$ and choosing the first one whose time is at most $\tau(a_{i+1})$ (which exists since all leaves have the same timing).

Note that by adding the transfer edge $(\hat{w}, \hat{a}_{i+1})$, the times $\tau$ of every edge have remained unchanged and still satisfy the time-consistency definition, with the exception of the edges linking $w', \hat{w}$, and $w$, those linking $a'_{i+1}, \hat{a}_{i+1}$, and $a_{i+1}$, as well as the new transfer edge. Since $\tg{\hat{w}} = \tg{\hat{a}_{i+1}}$ is made explicit on line~\ref{line:transfer-time}, 
to conclude the proof it suffices to show that $\tg{w'} > \tg{\hat{w}} > \tg{w}$ and that $\tg{a'_{i+1}} > \tg{\hat{a}_{i+1}} > \tg{a_{i+1}}$. By induction we know that $\tg{w'} > \tg{w}$ and that $\tg{a'_{i+1}} > \tg{a_{i+1}}$ (this holds before and after the transfer insertion because they were not changed). Using these inequalities we have that:
\begin{equation*}
\tg{\hat{w}} = \dfrac{\min(\tg{w'}, \tg{a'_{i+1}}) + \tg{a_{i+1}}}{2} < \dfrac{\tau(w') + \tau(w')}{2} = \tau(w')
\end{equation*}

since $\min(\tau(w'), \tau(a'_{i+1})) \leq \tau(w')$ and $\tau(a_{i+1}) < \tau(w')$ both hold.  Note that it is also true that

\begin{equation*}
\tg{\hat{w}} > \frac{\tau(a_{i+1}) + \tau(a_{i+1})}{2} = \tau(a_{i+1}) \geq \tau(w) 
\end{equation*}
since $\min(\tau(w'), \tau(a'_{i+1})) > \tau(a_{i+1})$ (because $\tau(w') > \tau(a_{i+1})$ and $\tau(a'_{i+1}) > \tau(a_{i+1})$ both hold). Using the same arguments, we have
\begin{equation*}
\tg{\hat{a}_{i+1}} = \dfrac{\min(\tg{w'}, \tg{a'_{i+1}}) + \tg{a_{i+1}}}{2} < \dfrac{\tau(a'_{i+1}) + \tau(a'_{i+1})}{2} = \tau(a'_{i+1})
\end{equation*}

and
\begin{equation*}
\tg{\hat{a}_{i+1}} > \frac{\tg{a_{i+1}}+ \tg{a_{i+1}}}{2} > \tg{a_{i+1}}    
\end{equation*}

\end{proof}

\begin{lemma}\label{lem:ok-labeling}
 Let $G$ and $l$ be the network and $\C$-labeling returned by Algorithm~\ref{alg:greedy}, respectively, on input $T, \S$, and $\lambda$.  Then $T$ is the base tree of $G$.  Furthermore,  $l$ explains $\S$ and satisfies $l(v) = \lambda(v)$ for every $v \in V(G) \cap V(T)$.
\end{lemma}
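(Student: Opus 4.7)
That $T$ is the base tree of $G$ follows immediately from the construction in Algorithm~\ref{alg:greedy}: the network $G$ is initialized as $T$ with no transfers, and the only modifications to $G$ are applications of $\trans{}$, each of which subdivides two support edges and attaches one transfer between the new subdivision nodes.  Removing every transfer edge and suppressing every subdivision node therefore restores $T$.  The claim $l(v)=\lambda(v)$ for every $v\in V(G)\cap V(T)$ is obtained by tracking the updates to $l$: the assignments at line~\ref{line:label-cap} concern the freshly created subdivision nodes $\hat w$ and $\hat a_{i+1}$, while the else-branch updates $l$ only at $w$ and $a'_{i+1}$, which are endpoints of an already existing transfer edge and are hence subdivision nodes produced in an earlier iteration.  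No vertex of $V(T)$ is ever touched.

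To prove that $l$ explains $\S$, I would verify the three conditions of Theorem~\ref{thm:connect} for every $c\in\C$.  Condition (i) (agreement with $\s$ at the leaves) is inherited directly from $\lambda$.  For condition (iii), I would prove, by induction on all applications of $\trans{}$ performed so far, that $l$ is at all times a no-loss labeling on every support edge of the current $\sup{G}$; this immediately forces $\ov{V}_c(l)$ to be ancestor-closed in $\sup{G}$, yielding (iii).  The inductive step amounts to checking the two newly inserted subdivision nodes: for $\hat w$ with $l(\hat w)=(l(w)\cap l(w'))\cup\{c\}$, the inductive no-loss hypothesis gives $l(w')\subseteq l(w)$, so the label simplifies to $l(w')\cup\{c\}$, and the containments $l(w')\subseteq l(\hat w)\subseteq l(w)$ required on the two fresh support edges both hold because $c\in l(w)$, which follows from $w$ being a descendant of the first-appearance node $a_i$.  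The symmetric argument handles $\hat a_{i+1}$.  Whenever the else-branch fires, the only modifications are to a subdivision node $w$ whose support path down to the leaves already lies in $V_c(l)$ (as its leaves are contained in subtrees rooted at first-appearance nodes of $c$ merged by the existing transfer), so no-loss is preserved there as well.

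Condition (ii) (unique source and connectedness of $G[V_c(l)]$) is the main obstacle.  Fixing $c$, I would proceed by induction on the inner-loop index $i$, maintaining the invariant that after iteration $i$ the connected component of $G[V_c(l)]$ containing $T(a_1)$ incorporates $T(a_2),\ldots,T(a_{i+1})$ together with every subdivision node so far introduced for $c$, possesses exactly one source, and that $T(a_{i+2}),\ldots,T(a_k)$ remain separate components with sources $a_{i+2},\ldots,a_k$.  The delicate subcase is when the time constraints force $w_i=a_i$, which by the reasoning used inside the proof of Lemma~\ref{lem:timec} happens precisely when $a_i$ is a leaf; the freshly created $\hat w_i$ then sits immediately above $a_i$ in $\sup{G}$.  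For $i=1$ this just promotes the unique source from $a_1$ to $\hat w_1$, since the parent of $\hat w_1$ in $\sup{G}$ is still the original (non-$c$) parent of $a_1$.  For $i\geq 2$ the current parent of $a_i$ in $\sup{G}$ is the subdivision $\hat a_i$ created during iteration $i-1$, which already lies in $V_c(l)$ by construction, so $\hat w_i$ inherits a support in-edge from $V_c(l)$ and is not a new source.  A parallel check shows that the only in-edge of $\hat a_{i+1}$ in $G[V_c(l)]$ is the new transfer $(\hat w_i,\hat a_{i+1})$, so each iteration decreases the number of sources by exactly one and after $k-1$ iterations a single source remains.  The else-branch merely re-uses an already existing transfer and so adds no new component or source.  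Combined with (i) and (iii), this shows that $l$ explains $\S$ and finishes the proof.
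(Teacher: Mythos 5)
Your structural claims (base tree, preservation of $\lambda$ on $V(T)$, leaf agreement) and your treatment of the no-loss condition essentially match the paper's argument, and deriving condition (iii) of Theorem~\ref{thm:connect} from a globally maintained no-loss invariant is fine. The genuine gap is in condition (ii): your induction runs only over the inner loop of the iteration that processes the fixed character $c$, so what you establish is that $G[V_c(l)]$ has a unique source reaching everything \emph{at the end of $c$'s own iteration}. The lemma, however, concerns the network and labeling returned at the very end, and every later character $c'$ keeps modifying $G$: its transfer insertions subdivide support edges that may be used by paths inside $G[V_c(l)]$, and a priori such a subdivision could disconnect $V_c(l)$ (if the new node were not given $c$) or create a new in-degree-$0$ node in $G[V_c(l)]$. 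You never argue that these later iterations preserve the property for $c$; the paper flags exactly this as the non-obvious point and devotes the bulk of its proof to it (the case analysis on whether $c_j$ lies in $l(\hat w)$, in $l(\hat a_h)$, in both or in neither, showing that every severed path can be rerouted through the new subdivision nodes). The gap is repairable --- line~\ref{line:label-cap} puts $c \in l(\hat w)$ whenever both endpoints of the subdivided edge carry $c$, so reachability inside $V_c(l)$ survives, and any new node carrying a character other than the one currently processed has its support parent carrying it too, so no new source appears --- but this argument is the heart of the lemma and is simply missing from your write-up.

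A secondary inaccuracy: the invariant you propose to maintain, namely that $T(a_{i+2}),\dots,T(a_k)$ ``remain separate components with sources $a_{i+2},\dots,a_k$'', is not true in general. Transfer edges inserted for characters processed before $c$ can already join two of these subtrees inside $G[V_c(l)]$, since their endpoints inherit $c$ through the intersection rule of line~\ref{line:label-cap}; distinct first-appearance subtrees therefore need not be distinct components. This does not hurt the conclusion --- the in-degree-$0$ nodes of $G[V_c(l)]$ are still exactly the first-appearance nodes, and each inner iteration eliminates one of them --- but the invariant should be weakened (e.g., track the set of sources and the fact that the component of the designated origin contains $T(a_1),\dots,T(a_{i+1})$) for your induction to go through as stated.
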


\begin{proof}
Let $G = (V, E_S \cup E_T)$ be the network returned by the algorithm and let $l$ be the returned labeling.
To see that $l(v) = \lambda(v)$ for every node $v$ in the base tree, it suffices to note that the algorithm never changes the labeling of a node initially present in $T$: it only assigns sets of characters to nodes created by transfer insertion operations.
It is also easy to see that $T$ is the base tree of $G$, since we start with a copy of $T$ and only attach transfer edges to it.

We will argue that the returned $\C-$labeling $l$ explains $\S$ using the conditions required by Definition~\ref{def:hgt-explain}. First, by requirement on the input $\lambda$, we know that for every $v \in L(G)$, $l(v) = \lambda(v) = \s(v)$.

Let us next show that for every character $c \in \C$ there exists a unique node $v \in V_c(l)$ that reaches every node in $G[V_c(l)]$.  It is not hard to see that after handling a particular $c \in \C$ in the algorithm, this property will be satisfied for $V_c(l)$.  However, it is not obvious that the subsequent iterations on other characters will not ``break'' this property for $c$.
We thus prove the following statement. Assume that the algorithm handles the characters of $\C$ in order $c_1, \ldots, c_m$ in the main $while$ loop.  Then we claim that in the network $G$ obtained after finishing the $i$-th iteration and handling $c_1, c_2, \ldots, c_i$, for every $j \leq i$, there exists a unique node $v \in V_{c_j}(l)$ that reaches every node in $G[V_{c_j}(l)]$.  This shows the desired property since it will hold for $j = m$, i.e. for every character.
As a base case, consider $i = 0$.  Then it is true that for $j \leq i$, the desired node $v$ exists (because there is no $c_j$ to satisfy).
Now, assume that $i > 0$ and that before entering the $i$-th iteration, the statement holds for every $c_j$ with $j \leq i - 1$.\\
After we are done handling $c_i$ on the $i-$th iteration we know that there exists a vertex $a_1 \in X_{c_i}$ such that $\tau(a_1) \geq \tau(a_h)$ for all $a_h \in X_{c_i}$. In particular, when equality holds for some $a_h$, it must be that $\tau(a_1) = \tau(a_2)$.  In this case, when the algorithm iterates on $a_1$, we get $(w', w) = (a'_1, a_1)$ and the corresponding transformation yields $G\trans{a_1,a_2}$. In this case, we claim that the vertex $\hat{a}_1$ created by the subdivision of the edge $(a_1',a_1) \in E_s$ will remain as the unique source for $V_{c_i}(l)$. To see this first note that $a_1$ has no incoming edge from $V_c(l)$ at the start of the iteration, because if that was not the case then $c \in l(a_1')$ and so $a_1$ would not have been a first-appearance node in the first place. This in turn implies that $\hat{a}_1$ has no incoming edge after the $i$-th iteration, since all other transfer heads are added above $a_2, \ldots, a_k$.  Additionally, $\hat{a}_1$ will become the first-appearance node for its corresponding subtree. After applying $G\trans{a_1,a_2}$, $\hat{a}_1$ now reaches the new parent $\hat{a}_2$ of $a_2$ and all its descendants in $\sup{G}$. Subsequently, we will now choose a descendant $w$ of $a_2$ from which we will add a new transfer to the parent node $\hat{a}_3$ of $a_3$. In this way, $\hat{a}_1$ will also reach $a_3$ and all of its descendants. We will continue adding transfer operations in this way so that finally $\hat{a}_1$ will be able to reach the last vertex of $X_{c_j}$, $a_k$ and all of its descendants. Note that any node of $V_{c_i}(l)$ is reachable by an element of $A_{c_i}$, thus after the $i-$th iteration $\hat{a}_1$ reaches every node in the $G[V_{c_i}(l)]$ subgraph.

On the other hand, when we have the strict inequality, i.e. when $\tau(a_1) > \tau(a_2)$, then the first chosen $w$ is distinct from $a_1$, and using the same arguments, we see that $a_1$ is the unique origin for $V_{c_i}(l)$.

We must also argue that the $i$-th iteration does not ``break'' a $c_j$ with $j < i$.  Consider such a $c_j$.  By induction, before the $i$-th iteration, $G[V_{c_j}(l)]$ had a unique origin $v$. 
Suppose that during the $i-$th iteration of the main loop we created some transformation $G\trans{w,a_h}$ after which some node, say $z$, that possesses $c_j$ cannot be reached by $v$ in the transformed graph.  Let $w', a'_h$ be the parents of $w$ and $a_h$, respectively, before the addition of the transfer.  Also let $\hat{w}$ and $\hat{a}_h$ be the nodes which were created by this transformation.

First assume that $z = \hat{w}$.  Then $c_j \notin l(w')$, as otherwise if $c_j \in l(w')$, by induction, $v$ would reach $w'$ and thus also reach $\hat{w} = z$.  But because $c_j \notin l(w')$, $c_j \notin l(w') \cap l(w)$ and so the algorithm would not have put $c_j$ in $\hat{w} = z$, a contradiction.  By the same argument, $z \neq \hat{a}_h$.  Thus, $z$ was present in $G$ before the insertion of the transfer.

Next, assume that $c_j \notin l(\hat{w})$ and $c_j \notin l(\hat{a}_h)$.  If $v$ cannot reach $z$ anymore, every path from $v$ to $z$ in $G[V_{c_j}(l)]$ must have been going through  $(w', w)$ or $(a'_h, a_h)$ before the transfer insertion.  But then, $c_j \in l(w') \cap l(w)$ and $c_j \in l(a'_h) \cap l(a_h)$, and the algorithm would have put $c_j \in l(\hat{w})$ and $c_j \in l(\hat{a}_h)$, a contradiction.

Then either $c_j \in l(\hat{w})$, $c_j \in l(\hat{a}_h)$ or $c_j \in l(\hat{w}) \cap l(\hat{a}_g)$.
Assume $c_j \in l(\hat{w})$. 
By line \ref{line:label-cap}, we know that this would only be possible if $c_j \in l(w')\cap l(w)$.
So any path from $v$ to $z$ in $V_{c_j}(l)$ that used the $(w', w)$ edge can now use the edges $(w', \hat{w}), (\hat{w}, w)$ to reach $z$.  If $c_j \in l(\hat{a}_h)$ as well, the same idea applies, and we get that any path from $v$ to $z$ is still usable, albeit with either $\hat{w}$ or $\hat{a}_h$ as an additional vertex.  So it must be that $c_j \notin l(\hat{a}_h)$, and that all paths used $(a'_h, a_h)$.  As before, this means that $c_j \in l(a'_h) \cap l(a_h)$ and that we should have $c_j \in l(\hat{a}_h)$, a contradiction.  This covers the case $c_j \in l(\hat{w})$.  
The case $c_j \in l(\hat{a}_h)$ can be handled in the same manner.

We then show that for each support edge $(u,v) \in E_S, c \in l(u)$ implies that $c \in l(v)$. 
We argue that this property holds before and after any transfer edge is inserted.
  Notice that initially, when $G$ is just a copy of $T$ and $l$ a copy of $\lambda$, $c \in l(u)$ implies $c \in l(v)$ because $\lambda$ is a no-loss labeling.
Now suppose inductively that the property holds before we insert some transfer $(\hat{w}, \hat{a}_{i+1})$ by line~\ref{line:transformation}.  It suffices to argue that the property holds on the support edges $(w', \hat{w}), (\hat{w}, w), (a'_{i+1}, \hat{a}_{i+1})$, and $(\hat{a}_{i+1}, a_{i+1})$, as defined in the algorithm, because no other support edge is modified.
Let $c' \in l(w')$ (we distinguish $c'$ from $c$, the latter being the $c$ the algorithm is currently iterating on).  Then by assumption that the property held before the transfer addition, we must have $c' \in l(w)$
and, because $c' \in l(w) \cap l(w')$, $c'$ will be added to $l(\hat{w})$, as desired.  The same argument holds for $c' \in l(a'_{i+1})$ and the fact that $c' \in l(\hat{a}_{i+1})$.  Now let $c' \in l(\hat{w})$.  We want to argue that $c' \in l(w)$.  If $c' \in l(w')$, then again by assumption we have $c' \in l(w)$ as well and our property holds.  So suppose that $c' \notin l(w')$.  
The algorithm puts $l(\hat{w}) = (l(w) \cap l(w')) \cup \{c\}$ where $c$ is the character of the current iteration, which means that only $c = c'$ is possible.
Notice that the algorithm chooses the edge $(w, w')$ in the subtree $\sup{G}(a'_i)$, where $a'_i$ has child $a_i$ that is a first-appearance node for $c$. By assumption, every descendant of $a_i$ in the support tree possesses $c$, so only $w' = a'_i$ is possible.  Thus $w = a_i$ and $c \in l(a_i) = l(w)$, as desired.  Finally, let $c' \in l(\hat{a}_{i+1})$. 
If $c' \in a'_{i+1}$, by assumption $c' \in l(a_{i+1})$ and we are done.  Otherwise, as the previous case we must have $c' = c$ and, since $a_{i+1}$ is a first-appearance for $c$, we have $c \in l(a_{i+1})$ as desired.
\end{proof}

\begin{theorem}\label{thm:greedyalg-is-ok}
Algorithm~\ref{alg:greedy} solves the {\sf Pre-labeled Tree Completion} problem correctly in time $O(|\C|^2|\S| + |\C||\S|^2)$.
\end{theorem}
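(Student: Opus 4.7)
The plan is to piggyback on the two preceding lemmas for correctness and then carry out a per-character complexity accounting. Correctness is essentially a repackaging exercise: Lemma~\ref{lem:timec} already gives that the returned $(G,\tau)$ is time-consistent, hence $G$ is a legitimate (time-consistent) tree-based network, and Lemma~\ref{lem:ok-labeling} gives the three remaining requirements of the problem, namely that $T$ is the base tree of $G$, that the returned $l$ explains $\S$, and that $l(v)=\lambda(v)$ on every node $v\in V(G)\cap V(T)$. So the first step of the proof is a one-line combination of these two lemmas to witness that the output is a valid solution to {\sf Pre-labeled Tree Completion}.

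For the running time I would proceed character by character. As preprocessing, I would build the initial $G$ from $T$ in $O(|\S|)$ time, construct the time map $\tau$ with distinct times on internal nodes in one top-down pass, and precompute a global ordering of the vertices of $T$ by $\tau$. Then, for each $c\in\C$: (i) compute $A_c$ by a single post-order traversal of $\sup{G}$ restricted to base-tree nodes in $O(|\S|)$; (ii) extract $X_c$ in $O(|A_c|)$ using the precomputed order; (iii) for each of the at most $|A_c|-1\le|\S|-1$ inner iterations, locate the edge $(w',w)$ by a descent from $a_i'$ down the subtree rooted at $a_i$ until the time drops to $\tau(a_{i+1})$, which I would argue costs $O(|\S|)$; (iv) perform the transformation $G\blacktriangle(w,a_{i+1})$ in $O(1)$, and compute the two label intersections in line~\ref{line:label-cap} in $O(|\C|)$. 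Summing per character yields $O(|\S|^2+|\S|\,|\C|)$, and multiplying by $|\C|$ gives $O(|\C|\,|\S|^2+|\C|^2|\S|)$, as claimed.

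The main obstacle in this accounting is that $\sup{G}$ keeps growing: after processing several characters, its size may reach $\Theta(|\C|\,|\S|)$, so a naive descent in step (iii) above would cost $O(|\C|\,|\S|)$ per transfer and blow up the budget. To circumvent this, the key observation I would exploit is that the base tree $T$ never changes and that every subdivision node introduced by the algorithm lies on a former base-tree edge with a time between those of its two $T$-endpoints. Consequently, the search for a base-tree edge $(w',w)$ with $\tau(w')>\tau(a_{i+1})\ge\tau(w)$ can be performed entirely on $T(a_i)$ (choosing among the at most $O(|\S|)$ original nodes), and once such an edge is found, I would use a pointer maintained on each base-tree edge to its current ``youngest child'' in $\sup{G}$ to snap to the correct concrete edge in $O(1)$. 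This reduction keeps step (iii) at $O(|\S|)$ and is the only nontrivial ingredient in the bound.

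After handling this point, the remaining pieces are routine: the per-character cost splits cleanly into an $O(|\S|^2)$ navigation cost and an $O(|\S|\,|\C|)$ labeling cost, and summing over all $|\C|$ characters gives the target bound $O(|\C|^2|\S|+|\C|\,|\S|^2)$, completing the proof.
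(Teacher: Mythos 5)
Your proposal is correct and follows the same overall structure as the paper: correctness is obtained exactly as in the paper by combining Lemma~\ref{lem:timec} (time-consistency) with Lemma~\ref{lem:ok-labeling} (base tree preserved, $\lambda$ preserved, $l$ explains $\S$), followed by a per-character complexity accounting. Where you genuinely diverge is in the running-time analysis, and your version is in fact the more careful one. The paper simply bounds the final network size by $n = O(|\C||\S| + |V(T)|)$, charges $O(n^2)$ per character for the descent steps, and then asserts the stated bound; taken literally that accounting overshoots (it would give $O(|\C| n^2)$ with $n = \Theta(|\C||\S|)$), so your observation that the naive descent in the grown support tree is the bottleneck, and your fix of confining the search for $(w',w)$ to the $O(|\S|)$ base-tree nodes of $T(a_i)$ (using that subdivision-node times interleave strictly between the times of their base-tree endpoints), is precisely what makes the claimed $O(|\C|^2|\S| + |\C||\S|^2)$ bound go through cleanly. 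Two small points you should make explicit: (i) computing $A_c$ only over base-tree nodes is legitimate because a subdivision node created for an earlier character cannot be a first-appearance of a not-yet-processed character (its label on a character $c' \neq c$ is inherited from both endpoints by the intersection rule on line~\ref{line:label-cap}, so $c \in l(\hat w)$ forces $c$ in its parent); and (ii) the $O(1)$ ``snap'' via a single youngest-child pointer is under-specified, since a base-tree edge may carry several subdivision nodes and you need the concrete edge straddling $\tau(a_{i+1})$ --- but a linear scan of that subdivided path costs only $O(|\C|)$ per transfer (at most two subdivisions per base edge per character), which still fits inside the $O(|\C||\S| + |\S|^2)$ per-character budget.
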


\begin{proof}
By Lemma~\ref{lem:timec}, the network $G$ output by the algorithm has $T$ as its base tree and is time-consistent.
Then by Lemma~\ref{lem:ok-labeling}, the output labeling $l$ preserves the pre-labeling $\lambda$ and explains $\S$.  Thus, the output is correct.

As for the running time, the complexity is dominated by the main $while$ loop over $c \in \C$.
Note that first-appearance nodes partition the leaves of $G[V_c(l)]$, and so each $A_c$ has at most $|\S|$ elements and, for each $c$ we add at most $|\S|$ transfers.  It follows that the final network $G$ output by the algorithm has at most $O(|\C||\S| + |V(T)|)$ nodes.  Let us denote $n = |V(G)| \leq |\C||\S| + |V(T)|$.
For a given $c \in \C$, computing the first-appearance nodes can be done in time $O(n)$ and sorting them takes time $O(n \log n)$.
Then for each of the $O(n)$ nodes in $X_c$, we must find a descendant $w$ in time $O(n)$.  The other operations take constant time.
Thus, each iteration of the main loop takes time $O(n \log n + n^2) = O(n^2)$.  This is repeated $O(|\C|)$ times, and thus the complexity is $O(|\C|n^2) = O(|\C| \cdot (|\C||\S| + |V(T)|))$.
The claimed complexity follows from $|V(T)| \in O(|\S|)$ since $T$ is a tree on leafset $\S$.
\end{proof}

\subsection*{On minimizing the number of transfers in a completion}

We have shown that any tree whose leaves are mapped to $\S$ can explain $\S$.  The tree completion therefore becomes more interesting in the minimization variant:

\vspace{3mm}

\noindent 
The {\sf Minimum Perfect Transfer Completion} problem\\
\textbf{Input.} A set of taxa $\S$ on characters $\C$, a tree $T$ with $\S$-map $\s$.\\
\textbf{Output.}  A PTN $(G, \s)$ for $\S$ whose base tree is $T$ that contains a minimum number of transfer edges.

\vspace{3mm}

Note that the above problem does not impose a pre-labeling of $T$.
However, one such labeling $\lambda$ that is natural is the one where maximal subtree containing a character are assigned that character, which we call the Fitch labeling.

\begin{definition}
    Let $T$ be a tree with $\S$-map $\s$, where $\S$ are on characters $\C$.  The \emph{Fitch-labeling} of $T$ is the labeling $\lambda$ of $T$ such that, for each $c \in \C$, we put $c \in \lambda(v)$ if and only if all leaves descending from $v$ contain $c$.
\end{definition}

The Fitch-labeling can be combined with Algorithm~\ref{alg:greedy} to obtain basic bounds on the number of transfers required.

\begin{proposition}\label{prop:completion-bounds}
    Let $T$ be a tree with $\S$-map $\s$, where $\S$ is on characters $\C$.  
    Let $\lambda$ be the Fitch labeling for $T$, and for $c \in \C$, let $A_c$ be the set of first-appearance nodes of $c$ under $\lambda$.
    Then 
    \begin{itemize}
        \item 
        any PTN for $\S$ with base tree is $T$ requires at least $\max_{c \in \C} (|A_c| - 1)$ transfer edges;

        \item 
        there exists a PTN for $\S$ with base tree $T$ with at most $\sum_{c \in \C} (|A_c| - 1)$ transfer edges.  Moreover, Algorithm~\ref{alg:greedy} returns such a PTN when given pre-labeling $\lambda$.
    \end{itemize}
\end{proposition}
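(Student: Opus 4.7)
The plan is to prove the two bounds independently: the upper bound follows by running Algorithm~\ref{alg:greedy} with the Fitch pre-labeling $\lambda$ and counting the inserted transfers, while the lower bound requires showing that in any valid labeling, the disjoint first-appearance subtrees must be stitched together by transfer edges.

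For the upper bound, I would first verify that $\lambda$ is a no-loss labeling of $T$, which is immediate from its definition since any leaf descending from a child of $v$ also descends from $v$, hence $\lambda(v) \subseteq \lambda(w)$ whenever $w$ is a child of $v$. Algorithm~\ref{alg:greedy} is therefore applicable to $(T, \S, \lambda)$, and Theorem~\ref{thm:greedyalg-is-ok} guarantees it returns a PTN $(G, \s)$ with base tree $T$. I then bound the number of inserted transfers by inspecting the main loop: for each character $c$, the algorithm iterates over the $|A_c|-1$ consecutive pairs of $X_c$, and each iteration contributes at most one new transfer edge (the \emph{Else} branch reuses an existing one). Summing over $c \in \C$ yields the bound $\sum_{c \in \C}(|A_c|-1)$.

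For the lower bound, fix any PTN $(G,\s)$ with base tree $T$, any $\C$-labeling $l$ explaining $\S$, and any character $c$ with first-appearance set $A_c = \{a_1, \ldots, a_k\}$. The core observation is that, by the maximality inherent to Fitch's labeling, every node in $V(T) \setminus \bigcup_i V(T(a_i))$ has some descendant leaf lacking $c$ and therefore lies in $F_c \subseteq \ov{V}_c(l)$ by Lemma~\ref{lem:Fzero}. A subdivision node of $\sup{G}$ that belongs to $V_c(l)$ must additionally have all its support-ancestors in $V_c(l)$, which forces it onto an edge of $T$ strictly interior to some $T(a_i)$. Accordingly, let $B_i$ be $V(T(a_i))$ together with the subdivision nodes of $\sup{G}$ lying on edges of $T(a_i)$. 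Then $V_c(l) \subseteq \bigcup_i B_i$, the $B_i$'s are pairwise disjoint, and each $B_i \cap V_c(l)$ contains the leaves of $T(a_i)$ and is hence non-empty.

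The structural lemma to then establish is that every support edge of $G$ with both endpoints in $V_c(l)$ lies entirely inside a single $B_i$: the only $T$-edge chains bridging different $B_i$'s are the boundary chains sitting above each $a_i$, and the same ancestor argument excludes all their interior subdivision nodes from $V_c(l)$. Given this, the connectivity requirement on $G[V_c(l)]$ from Theorem~\ref{thm:connect} forces the transfer edges of $G$ to connect the $k$ non-empty groups $B_1 \cap V_c(l), \ldots, B_k \cap V_c(l)$, which requires at least $k-1$ distinct transfer edges. Taking the maximum over $c \in \C$ yields $|E_T| \geq \max_{c \in \C}(|A_c|-1)$. The main obstacle is the careful bookkeeping around subdivision nodes, namely confirming that support paths in $G[V_c(l)]$ truly cannot bridge two different $B_i$'s; once this is nailed down, the contracted multigraph argument for $k-1$ edges is standard connectivity.
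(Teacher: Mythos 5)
Your upper-bound argument is the same as the paper's: Fitch is a no-loss labeling, Theorem~\ref{thm:greedyalg-is-ok} applies, and each of the $|A_c|-1$ iterations for $c$ inserts at most one transfer. One small point you gloss over (which the paper flags explicitly) is that Algorithm~\ref{alg:greedy} recomputes the first-appearance nodes of $c$ in the \emph{current} $\sup{G}$, after subdivision nodes for earlier characters have been inserted and labeled; one must observe that their number never increases, so the count $|A_c|-1$ per character still holds. Your lower bound has the same spirit as the paper's (the Fitch first-appearance subtrees are forced to be the only places carrying $c$, and transfers must stitch them together), but one of your steps is wrong as written.

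The flawed step is the claim that a subdivision node of $\sup{G}$ in $V_c(l)$ ``must have all its support-ancestors in $V_c(l)$,'' forcing it strictly inside some $T(a_i)$. This reverses Lemma~\ref{lem:Fzero}: the never-lost condition makes $\ov{V}_c(l)$ upward-closed in $\sup{G}$, equivalently $V_c(l)$ \emph{downward}-closed; it does not make $V_c(l)$ upward-closed. In particular, a subdivision node on the edge of $T$ between $a_i$ and its parent can carry $c$ even though its support-ancestors do not --- indeed this is exactly where Algorithm~\ref{alg:greedy} places the transfer head $\hat{a}_{i+1}$ and assigns it $c$ --- so your containment $V_c(l) \subseteq \bigcup_i B_i$ with $B_i$ built only from edges strictly interior to $T(a_i)$ is false, and the ``structural lemma'' excluding boundary-chain subdivision nodes fails. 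The bound survives under a local repair: enlarge $B_i$ to also contain the subdivision nodes lying on the edge between the parent of $a_i$ and $a_i$. These enlarged sets are still pairwise disjoint (the $a_i$ are incomparable, so these chains sit on distinct $T$-edges), subdivision nodes on any other $T$-edge outside the $T(a_i)$'s are excluded from $V_c(l)$ by the correct direction of Lemma~\ref{lem:Fzero} (their lower $T$-endpoint is in $F_c$), and since the parent of $a_i$ lies in $F_c \subseteq \ov{V}_c(l)$, no support edge of $G[V_c(l)]$ joins two distinct $B_i$'s; the contraction argument then gives at least $|A_c|-1$ transfer edges, and taking the maximum over $c$ finishes. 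For comparison, the paper's proof does not track subdivision nodes at all: it only notes that tree nodes not descending from $A_c$ are in $F_c$ and directly asserts that connecting the first-appearance subtrees needs $|A_c|-1$ transfers; your more explicit decomposition is a reasonable way to make that assertion rigorous, but only once the $B_i$'s are defined as above.
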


\begin{proof}
Let $(G, \s)$ be a PTN for $\S$ with base tree $T$.
Let $c \in \C$ be such that the number of first-appearance nodes in $A_c$ is maximum.  Notice that all nodes of $V(G) \cap V(T)$ that do not descend from a first-appearance node in $A_c$ cannot contain $c$ in any labeling of $G$ by Lemma~\ref{lem:Fzero} (since they have a descendant not possessing $c$, such a node is in $F_c$).  Therefore, any solution for $T$ must add at least $|A_c| - 1$ transfers to $T$ to be able to connect the first-appearance subtrees.  Thus $G$ requires at least $|A_c| - 1$ transfers. 

Now consider the output of Algorithm~\ref{alg:greedy} on pre-labeling $\lambda$.  For each $c \in \C$, the algorithm adds at most $|A_c| - 1$ transfer arcs when it considers $c$ in its $while$ loop (noting that the number of first-appearance nodes for $c$ never increases in the algorithm).  By Theorem~\ref{thm:greedyalg-is-ok}, the algorithm correctly returns a PTN, which has at most $\sum_{c \in \C}(|A_c| - 1)$ transfers.
\end{proof}

\begin{corollary}
Suppose that Algorithm~\ref{alg:greedy} is given $T, \S$, and the Fitch-labeling $\lambda$.  Then it is a $|\C|$-approximation, i.e. it outputs a PTN with at most $|\C|$ times more transfers than an optimal solution.
\end{corollary}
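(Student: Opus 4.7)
The plan is to combine the two bounds of Proposition~\ref{prop:completion-bounds} in the most direct possible way. Let $(G^*, \s)$ be an optimal PTN for $\S$ with base tree $T$, and let $N_{\mathrm{opt}}$ denote its number of transfer edges. Let $(G, \s)$ be the PTN returned by Algorithm~\ref{alg:greedy} on input $T, \S, \lambda$ (with $\lambda$ the Fitch-labeling), and let $N_{\mathrm{alg}}$ be its number of transfer edges. For each $c \in \C$, let $A_c$ denote the set of first-appearance nodes of $c$ under $\lambda$. We need to show that $N_{\mathrm{alg}} \leq |\C| \cdot N_{\mathrm{opt}}$.

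The first step is to invoke the lower bound half of Proposition~\ref{prop:completion-bounds}, which gives
\[
N_{\mathrm{opt}} \geq \max_{c \in \C} (|A_c| - 1).
\]
The second step is to invoke the upper bound half of the same proposition, which guarantees that the greedy algorithm produces at most
\[
N_{\mathrm{alg}} \leq \sum_{c \in \C} (|A_c| - 1)
\]
transfer edges. The third step is a routine inequality: each of the at most $|\C|$ summands is bounded above by the maximum, so
\[
\sum_{c \in \C} (|A_c| - 1) \leq |\C| \cdot \max_{c \in \C} (|A_c| - 1) \leq |\C| \cdot N_{\mathrm{opt}}.
\]
Chaining these yields $N_{\mathrm{alg}} \leq |\C| \cdot N_{\mathrm{opt}}$, as required.

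There is essentially no main obstacle here since Proposition~\ref{prop:completion-bounds} has already done the real work; the corollary is a one-line consequence of bounding a sum by $|\C|$ times its maximum term. The only minor subtlety worth mentioning in the write-up is that the lower bound from the proposition depends on $A_c$ computed under the Fitch-labeling of the input tree $T$, while the upper bound is the number of transfers the algorithm actually creates when running on this same labeling; both quantities are defined with respect to the same $A_c$, so comparing them is legitimate. One should also note that if $\max_c(|A_c| - 1) = 0$, then the tree $T$ together with the Fitch labeling already explains $\S$ without any transfer, and both bounds are $0$, so the ratio statement holds vacuously.
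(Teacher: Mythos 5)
Your proof is correct and follows essentially the same route as the paper: both use the lower bound $N_{\mathrm{opt}} \geq \max_{c \in \C}(|A_c|-1)$ and the upper bound $N_{\mathrm{alg}} \leq \sum_{c \in \C}(|A_c|-1)$ from Proposition~\ref{prop:completion-bounds}, then bound the sum by $|\C|$ times its maximum term. Your write-up is just a more explicit version of the paper's one-line argument, with a harmless extra remark about the degenerate case.
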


\begin{proof}
Using the notation of Proposition~\ref{prop:completion-bounds}, it states that an optimal PTN needs at least $|A_c| - 1$, and Algorithm~\ref{alg:greedy} adds at most $|\C|(|A_c| - 1)$ transfers. 
\end{proof}

Do note that 
    Algorithm~\ref{alg:greedy} does not always output an optimal solution to the {\sf Minimum Perfect Transfer Completion} problem.
To see that Algorithm~\ref{alg:greedy} can be suboptimal, consider Figure~\ref{fig:greedy-subopt} and the explanation in the caption.

\begin{figure}[h!]
    \centering
    \includegraphics[width=0.9\textwidth]{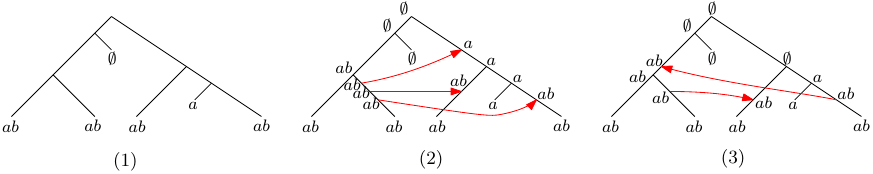}
    \caption{The greedy Algorithm~\ref{alg:greedy} can add more transfers than the minimum.  (1) A network $G$ whose leaves are on two characters $a$ and $b$. (2) One possible solution output by Algorithm~\ref{alg:greedy}.  The left clade is chosen to originate $a$ and a transfer is greedily added above all the $a$'s on the right side.  Then, the left clade is also chosen to originate $b$, but for that two more separate transfers are added.  (3) A solution that is somewhat less natural, but with one less transfer.}
    \label{fig:greedy-subopt}
\end{figure}

The cases in which Algorithm~\ref{alg:greedy} is suboptimal appear to have a common cause.  The algorithm tends to add transfers as high as possible in the tree, whereas the optmal solution would add more transfers lower in the tree, but with the advantage of being reusable by more characters.  For instance in Figure~\ref{fig:greedy-subopt}, $a$ is greedily transferred by itself and two more separate  transfers are required for $b$, whereas the optimal solution reuses the same transfers for both $a$ and $b$.
It appears difficult to determine when to add more transfers lower and when not to, which leads us conjecture that the {\sf Minimum Perfect Transfer Completion} is NP-hard.

We must reckon that a $|\C|$-approximation is far from interesting.
We conclude by suggesting a candidate approximation algorithm that combines both algorithms presented here.  Suppose that, given $T$ and Fitch-labeling $\lambda$, we obtain $G$ and $l$ from Algorithm~\ref{alg:greedy}.  A simple heuristic post-processing step can then be applied to detect unneeded transfers.
That is, for each transfer edge $(u, v)$ of $G$, consider the network $G - (u, v)$ obtained by removing $(u, v)$  and the resulting subdivision nodes.  Then, run Algorithm~\ref{alg:explain} to check if $G - (u, v)$ is a PTN for $\S$.  If so, we know that $(u, v)$ can safely be removed and we repeat.  We try every such transfer edge until all of them are necessary.
With this modified algorithm, we were unable to generate instances with more than twice as many transfers as the optimal solution.  This suggests that it might not be far from optimal.  
We conjecture that 
Algorithm~\ref{alg:greedy}, combined with the above post-processing step, achieves a constant factor approximation.

\subsection*{The PTN reconstruction problem}
We now study the variant in which only $\S$ and $\C$ are known, and no tree is given.  
Note that by Theorem~\ref{thm:greedyalg-is-ok}, there is no feasibility problem, since a solution always exists.  That is, we can take \emph{any} tree $T$ with any $\S$-map $\s$, run Algorithm~\ref{alg:greedy} using the Fitch-labeling, and obtain a PTN for $\S$.
The minimization variant has more appeal.

\vspace{3mm}

\noindent 
The {\sf Minimum Perfect Transfer Reconstruction} problem\\
\textbf{Input.} A set of taxa $\S$ on characters $\C$;\\
\textbf{Output.}  A PTN $(G, \s)$ for $\S$ with a minimum number of transfer edges.

\vspace{3mm}

This does not appear to be an easy algorithmic problem.  Proposition~\ref{prop:completion-bounds} suggests a (seemingly) simple approach: find a tree that minimizes the total number of first-appearance nodes, over all characters.  This does not guarantee that the resulting network will minimize transfers, and even finding a tree that minimizes the number of first-appearance nodes appears hard to find.  We conjecture that both problems are NP-hard, i.e. finding a PTN $(G, \s)$ for $\S$ with a minimum number of transfers, and finding a tree with $\S$-map $\s$ with a minimum number of first-appearance nodes.

In the following, we instead focus on providing bounds on the number of transfers required if $k := |\C|$ characters are present, in the worst case.  
It should be intuitive that the taxa set that will require the most transfers is when $\S = 2^{\C}$, i.e. $\S$ is the power set of $\C$.
When this is the case, we show that, up to a linear factor, an exponential number of transfers (with respect to $k$) is required and sufficient.

\begin{lemma}
    Any set of taxa $\S$ on a character set $\C$ of $k$ characters can  be explained by a tree-based network $G$ that has at most $2^k - (k-1)$ transfers.    
\end{lemma}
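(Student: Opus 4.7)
The plan is to construct, for any $\S$ on $k$ characters, a specific base tree $T$ whose Fitch labeling has few first-appearance nodes per character, and then feed $T$ together with its Fitch labeling into Algorithm~\ref{alg:greedy}. The desired bound then follows directly from Proposition~\ref{prop:completion-bounds}, which states that the number of transfers inserted is at most $\sum_{c \in \C}(|A_c|-1)$.

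For the construction, I would first assume the worst-case scenario $\S = 2^\C$. Given $\C = \{c_1, \ldots, c_k\}$, I build a complete binary tree $T^*$ of depth $k$ recursively: at the root, split $\S$ into the two halves $\{S \in \S : c_1 \in S\}$ and $\{S \in \S : c_1 \notin S\}$; then in each of the two resulting subtrees, split by $c_2$; and so on until depth $k$, where each leaf corresponds to a unique taxon. Under the Fitch labeling $\lambda$ of $T^*$, I claim that character $c_i$ has exactly $2^{i-1}$ first-appearance nodes: these are precisely the roots of the subtrees at depth $i$ whose leaves all contain $c_i$, namely the $2^{i-1}$ subtrees corresponding to the choices made at levels $1, \ldots, i-1$ coupled with the \emph{has $c_i$} branch at level $i$. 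For an arbitrary $\S \subseteq 2^\C$, I would build $T^*$ for the full power set, then restrict to the leaves in $\S$ and suppress degree-$2$ nodes; this pruning can only merge or delete maximal subtrees in which every leaf possesses $c_i$, so it can never increase $|A_{c_i}|$, and the bound $|A_{c_i}| \leq 2^{i-1}$ continues to hold on the pruned tree $T$.

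Finally, running Algorithm~\ref{alg:greedy} on $T$ with its Fitch labeling and invoking Proposition~\ref{prop:completion-bounds} yields a PTN with at most
\[
\sum_{i=1}^{k}(|A_{c_i}|-1) \;\leq\; \sum_{i=1}^{k}(2^{i-1}-1) \;=\; (2^k-1)-k \;=\; 2^k-(k+1)
\]
transfer edges, which is strictly below the claimed bound $2^k-(k-1)$. I do not expect any significant obstacle here: the only delicate point is verifying that pruning $T^*$ does not increase $|A_{c_i}|$, but this follows from the observation that first-appearance nodes for $c_i$ under the Fitch labeling correspond to the roots of the maximal subtrees whose leaves all possess $c_i$, and deleting leaves can only shrink, merge, or eliminate such maximal subtrees.
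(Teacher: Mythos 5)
Your proof is correct and follows essentially the same route as the paper: the same complete binary tree in which level $i$ splits on character $c_i$, giving $2^{i-1}$ first-appearance nodes per character, followed by Algorithm~\ref{alg:greedy} with the bound $\sum_{c\in\C}(|A_c|-1)$ from Proposition~\ref{prop:completion-bounds}, yielding $2^k-(k+1)\le 2^k-(k-1)$ transfers. Your pruning argument for general $\S\subseteq 2^{\C}$ is a slightly more explicit treatment of a step the paper only asserts, but the construction and counting are the same.
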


\begin{proof}
    Let $\S$ be a set of taxa on characters $\C = \{c_1, c_2, \ldots, c_k\}$.  We will assume that no two elements of $\S$ are identical (as two identical taxa can use the same transfers).  We will further assume that $\S = 2^{\C}$ contains every subset of $\C$.  Note that if we can explain $\S$ with at most $2^k - (k - 1)$ transfers, then we can explain any $\S' \subseteq \S$ with at most that many transfers.
    We will first show how to construct a base tree $T$ for $G$ from which we will then derive $G$. Let $T$ be a rooted complete binary tree with exactly $2^k$ leaves and root $\rho$. Let $l_b :V(T) \to \{0,1\}$ be a labeling such that for every inner node $u \in V(T)$ with $v,w$ as children we have that $l_b(v) = 0$ and $l_b(w) = 1$ (the label of the root is not important). For an example of this construction, see Figure \ref{fig:lemma16}.

    Let $\mathcal{L} = \{L_1, L_2, \dots , L_k\}$ be a partition of $V(T) \setminus \{\rho\} $ such that a vertex $u \in L_i$ with $i \in [k]$ if and only if the unique path from $u$ to the root contains exactly $i$ edges. We will call each $L_i$ a \emph{level} of $T$. 

    \begin{figure}
        \centering
        \includegraphics[width=0.85\textwidth]{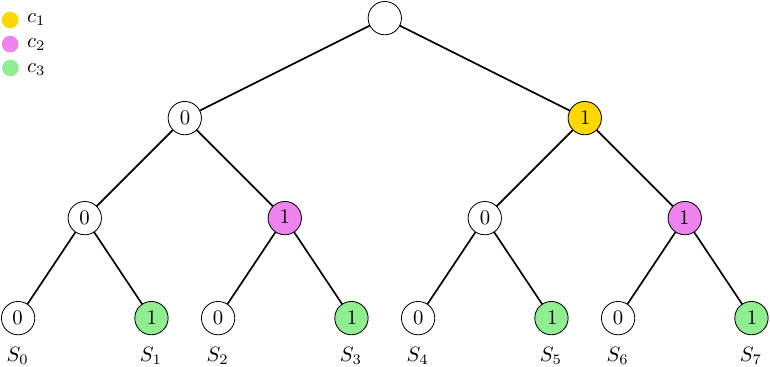}
        \caption{An illustration of $T$ with $k = 3$ and $|\S| = 8$.  Each color corresponds to a distinct character.  Each node labeled as $1$ is a first-appearance of the character associated with that color.  Notice that each level has its corresponding character.}
        \label{fig:lemma16}
    \end{figure}

    Consider the $\C$-labeling $l$ such that, 
    for each $i \in [k]$ and $u \in L_i$ with $l_b(u) = 1$, $u$ is a first-appearance of $c_i$.  Note that such a $\C$-labeling can be achieved by adding character $c_i$ to each node descending from every such $u$, and not adding $c_i$ to any other node. 
    For each $c_i \in \C$, we denote by $A_i$ its set of first-appearance nodes under $l$.
    Notice that $|A_i| = 2^{i-1}$, since $L_i$ has exactly $2^i$ nodes and, by definition, there exist $2^{i-1}$ nodes labeled as $1$.

    We argue that under $l$, the leaves of $T$ are in bijection with $\S$, and that $l$ can serve as the $\S$-map for $T$.  
    There are $2^k = |\S|$ leaves, so it suffices to show that for any distinct $u, v \in L(T)$, $l(u) \neq l(v)$.  Let $z = \lca_T(u, v)$ and let $z_1, z_2$ be the two children of $z$.  Note that $z_1, z_2$ are in the same level, say $L_i$ for some $i \in [k]$.  Moreover, $l_b(z_1) \neq l_b(z_2)$, implying that one of $u$ has $c_i$ and the other does not.  Then $l(u) \neq l(v)$.
    Finally, we will create a set of transfer edges $E_T$ for $T$ in the following way: we use Algorithm~\ref{alg:greedy} and give it $l$ as a pre-labeling.  
    Recall that for each character $c_i$, the algorithm adds at most $|A_i| - 1 = 2^{i-1} - 1$ transfer edges. 
    Since $2^{i-1} - 1$ transfers will be added to $T$ per level $L_i$ with $i \in \{1, 2, \ldots, k\}$, 
    this results in at most $$ \sum_{i=1}^{k} (2^{i-1} - 1) = 2^k - k - 1$$
    added transfers.
\end{proof}

\begin{lemma}
Let $\C$ be a set of $k$ characters.  Then there exists a set of taxa $\S$ on characters $\C$ such that
any PTN for $\S$ has at least $2^k/(3k) - 1$ transfer edges. 
\end{lemma}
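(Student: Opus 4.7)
The plan is to take $\S = 2^\C$, so that $|\S| = 2^k$, and to argue via a two-step counting argument that any PTN for this $\S$ needs at least $2^k/(3k) - 1$ transfer edges.

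For the first step, I would fix any PTN $(G,\s)$ for $\S$ with base tree $T$, explaining labeling $l$, and $t$ transfer edges, and for each $c \in \C$ denote by $A_c$ the set of first-appearance nodes of $c$ in $T$ under the Fitch labeling $\lambda$ (so $c \in \lambda(v)$ iff every leaf below $v$ possesses $c$).  By Lemma~\ref{lem:Fzero} every node carrying $c$ in $l$ descends from some node of $A_c$, so the $|A_c|$ first-appearance subtrees are disjoint regions; by Theorem~\ref{thm:connect} the graph $G[V_c(l)]$ must be connected with a unique source, and since support edges alone cannot bridge distinct first-appearance subtrees, at least $|A_c| - 1$ transfer edges of $G$ must ``carry'' $c$ (i.e., have $c$ in the labels of both endpoints).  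As each transfer edge $(u,v)$ carries at most $|l(u)| \le k$ characters, summing over $c$ yields
\[
\sum_{c \in \C}(|A_c| - 1) \;\le\; k t, \qquad \text{so} \qquad t \;\ge\; \frac{1}{k}\sum_{c \in \C}|A_c| \;-\; 1.
\]

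For the second step I must show $\sum_c |A_c| \ge 2^k/3$ for every binary tree $T$ whose leaves are bijectively labeled by $2^\C$. A telescoping computation with the Fitch labels gives the identity
\[
\sum_{c \in \C} |A_c| \;=\; \sum_{v \in L(T)} |\lambda(v)| \;-\; \sum_{v \in V(T)\setminus L(T)} |\lambda(v)|,
\]
and since the leaf sum equals $\sum_{S \subseteq \C}|S| = k \cdot 2^{k-1}$, it suffices to upper-bound $\sum_{v} |\lambda(v)|$ over internal $v$ by $k \cdot 2^{k-1} - 2^k/3$.  For each internal $v$ with $n_v$ leaves below, the distinct-subsets constraint forces $n_v \le 2^{k - |\lambda(v)|}$, equivalently $|\lambda(v)| \le k - \log_2 n_v$, and the next task is to aggregate these pointwise bounds across the tree.

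The hard part is precisely this aggregation: applied naively to every internal node, $|\lambda(v)| \le k - \log_2 n_v$ is too loose to yield the constant $1/3$.  A tighter estimate must exploit that at most one subtree of $T$ can collect all leaves sharing any given common prefix $J \subseteq \C$, so few internal nodes can simultaneously realize a large value of $|\lambda(v)|$.  Either a level-by-level counting (at each depth, bounding how many nodes can attain the per-node maximum) or a recursion on the root split $T = (T_L, T_R)$ using the identity $\Phi(T) = \Phi(T_L) + \Phi(T_R) + |\lambda(u_L) \triangle \lambda(u_R)|$ with an appropriate inductive hypothesis should succeed.  Once that combinatorial estimate is in place, the chain $t \ge \frac{1}{k}\sum_c |A_c| - 1 \ge 2^k/(3k) - 1$ finishes the proof.
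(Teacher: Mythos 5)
Your first step is sound: the double-counting argument (each character $c$ forces at least $|A_c|-1$ transfer edges whose endpoints both carry $c$, and each transfer edge carries at most $k$ characters) is a valid variant of what the paper gets from Proposition~\ref{prop:completion-bounds} plus a pigeonhole over characters, and your telescoping identity $\sum_{c}|A_c| = \sum_{v\in L(T)}|\lambda(v)| - \sum_{v\notin L(T)}|\lambda(v)|$ is correct for the Fitch labeling on a binary tree. The problem is that everything then hinges on the bound $\sum_{c\in\C}|A_c|\geq 2^k/3$, and this is precisely where your proposal stops: you observe (correctly) that the pointwise bound $|\lambda(v)|\leq k-\log_2 n_v$ is far too weak when aggregated naively, and you then offer two candidate strategies (level-by-level counting, or a recursion on the root split) with the assessment that one of them ``should succeed.'' Neither is carried out, no inductive hypothesis is formulated, and it is not at all evident how the constant $1/3$ would emerge from either route; the recursion identity $\Phi(T)=\Phi(T_L)+\Phi(T_R)+|\lambda(u_L)\,\triangle\,\lambda(u_R)|$ by itself gives nothing when the symmetric difference is empty, so the inductive statement would have to be strengthened in a nontrivial way. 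In short, the combinatorial core of the lemma is missing, and the rest of your argument is a (correct) reduction to it.

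For comparison, the paper supplies exactly this missing piece by a direct counting argument on the base tree: it builds a partial injection from leaves into the set $A(T)$ of nodes that are a first appearance of at least one character, handling cherries, non-cherry leaves that are themselves first appearances, and ``marked'' parents of the remaining leaves (for a marked node with a closest marked descendant, one shows a first appearance must sit strictly between them, using only that the taxa are pairwise distinct). This yields $|A(T)|\geq |L(T)|/3 = 2^k/3$, which is even stronger than what you need, since $\sum_c |A_c|\geq |A(T)|$; the paper then concludes by pigeonhole that some single character has at least $2^k/(3k)$ first-appearance nodes and invokes Proposition~\ref{prop:completion-bounds}. If you want to salvage your write-up, the cleanest fix is to replace your unproved aggregation step with an argument of this type (or any proof of $\sum_c|A_c|\geq 2^k/3$); as it stands, the proposal is an incomplete proof.
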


\begin{proof}
Let $\S = 2^{\C}$ be the power set of $\C$.
Let $(G, \s)$ be a PTN for $\S$ and let $T$ be the base tree of $G$.  Let $\lambda$ be Fitch-labeling of $T$.  In what follows, all first-appearance nodes of $T$ are assumed to be with respect to $\lambda$.

Let $A(T)$ be the set of nodes of $T$ that are a first-appearance node of at least one character of $\C$.  We argue that $|A(T)| \geq 2^k/3$.
To achieve this, we describe a partial injective function $f : L(T) \rightarrow A(T)$ that maps some leaves of $T$ to a first-appearance node.  
By partial injective function, we mean that perhaps not all leaves are mapped, but no two mapped leaves map to the same node.   
A \emph{cherry} of $T$ is a pair of leaves that have the same parent.  A \emph{non-cherry} leaf is a leaf that is not in a cherry.
Note that since $L(T)$ is in bijection with $\S$ we will refer to the leaves of $T$ as taxa, with the understanding that leaves are subsets of $\C$.

First let $x, y$ be leaves that form a cherry in $T$.  Then since all taxa are distinct, $x$, $y$ disagree on at least one character $c$, and hence 
one of $x$ or $y$ is a first-appearance node of $c$, let us say $x$ without loss of generality.
Then we put $f(x) = x$ (and leave $f(y)$ undefined).
Now let $N'$ be the set of non-cherry leaves of $T$.  For each $x \in N'$ such that $x$ is a first-appearance node for some character, put $f(x) = x$.  

Next let $N = N' \setminus A(T)$, i.e. $N$ contains non-cherry leaves that are not a first-appearance node, for any character.
For $x \in N$, denote by $p_x$ the parent of $x$ in $T$.  Let us denote the set of nodes $\{p_x : x \in N\}$ as \emph{marked nodes}.  
For two distinct marked nodes $p_x, p_y$, we say that $p_y$ is a \emph{closest marked descendant} of $p_x$ if $p_y \prec p_x$ and there is no marked node 
on the path between $p_x$ and $p_y$, except $p_x$ and $p_y$ themselves.
Suppose that $x \in N$ is such that $p_x$ has at least one closest marked descendant, say $p_y$.  We argue that there is a first-appearance node $z$ satisfying $p_y \preceq z \prec p_x$. 
This situation is illustrated in Figure~\ref{fig:markednodes}.

\begin{figure}
    \centering
    \includegraphics[width=0.28\textwidth]{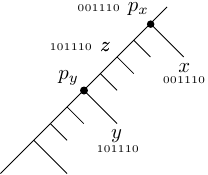}
    \caption{An illustration of $f(x) = z$ for a marked node $p_x$ that has a closest marked descendant $p_y$.
    Here, the characters of $x$ must be a subset of the characters of $y$, as otherwise $x$ would be a first appearance.  We can then argue that a character of $y \setminus x$ has its first appearance between $p_x$ and $p_y$.}
    \label{fig:markednodes}
\end{figure}

To see this, notice that $p_x$ is marked because $x$ is not a first-appearance.  This means that for every $c$ in $x$, all descending leaves of $p_x$ contain $c$.  This includes $y$, and so $x \subseteq y$.  Since $x$ and $y$ have distinct characters, there must be a character $c'$ such that $c'$ is in $y$ but $c'$ is not in $x$.  
Since $y$ is not a first-appearance of $c'$, all descendants of $p_y$ have $c'$.  Hence, there is a first-appearance node $z$ of $c'$ that is an ancestor of $p_y$, but a strict descendant of $p_x$ (because not all descendants of $p_x$ have $c'$).  We put $f(x) = z$.
Note that there may be multiple choices for $z$, in which case we choose arbitrarily.  The important property is that whichever choice is made, $z$ is a strict descendant of $p_x$, and not a strict descendant of any marked $p_y \prec p_x$.
This completes the description of $f$.

Suppose that $p_x, p_{x'}$ are distinct marked nodes that both have a closest marked descendant.  Notice that $f(x)$ and $f(x')$ must be distinct.  Indeed, if $p_x$ and $p_{x'}$ are incomparable, this is obvious since $f(x)$ and $f(x')$ are strict descendants of $p_x$ and $p_{x'}$, respectively.  If instead $p_x \prec p_{x'}$, then $f(x)$ is a strict descendant of $p_x$, whereas $f(x')$ is either a closest marked descendant of $p_{x'}$, or an ancestor of one of those nodes, implying that $f(x')$ cannot be a strict descendant of $p_x$.  
One can thus see that $f$ is injective, since it either maps leaves to themselves, or children of marked nodes to distinct internal nodes.

Now consider the set of leaves $x \in N$ such that $p_x$ has no marked descendants.  Note that all these marked $p_x$ nodes are pairwise incomparable.  Moreover, any such $p_x$ has a leaf child $x$, and another child with at least two leaves (if the other child was a single leaf, $p_x$ would form a cherry and $x$ would not be in $N'$).  Since any tree with at least two leaves has a cherry, $p_x$ has a descending cherry, and in fact each $p_x$ is the ancestor of a distinct cherry.  It follows that the number of $p_x$ with no marked descendant is at most the number of cherries.

We can finally relate the number of first-appearances to the number of taxa as follows.  
Let $C$ be the number of cherries of $T$; let $N_A$ be the number of non-cherry leaves that are first-appearances; let $N_M$ be the number of leaves in $N$ whose parent has a marked descendant; and let $N_X$ be the number of leaves in $N$ whose parent has no marked descendant.
Then 
\[
|L(T)| = 2C + N_A + N_M + N_X
\]
and, owing to our partial injection $f$, 
\[
|A(T)| \geq C + N_A + N_M
\]
We have argued just above that $N_X \leq C$, and so we can infer the chain of inequalities
\begin{align*}
C + N_A + N_M &= \frac{2C}{3} + N_A + N_M + \frac{C}{3} \\ 
			  &\geq \frac{2C}{3} + N_A + N_M + \frac{N_X}{3} \\
			  &\geq \frac{1}{3} (2C + N_A + N_M + N_X) \\
			  &\geq \frac{1}{3} |L(T)|
\end{align*}
and we get $|A(T)| \geq |L(T)|/3 = 2^{k}/3$.  

Each node of $A(T)$ is a first-appearance of some $c \in \C$.  
Therefore, by the generalized pigeonhole principle, some character $c$ of $\C$ must have at least $|A(T)|/|\C| \geq 2^{k}/(3k)$ first-appearance nodes in $T$.
By Proposition~\ref{prop:completion-bounds}, at least that many transfers, minus $1$, need to be added to $T$ to explain $\S$.
\end{proof}

\subsection*{Open problems}
We conclude this section with some open problems regarding all problems mentioned in this paper.

\begin{itemize}
    \item 
    Our algorithm for the recognition problem takes time $O(|\C||V(G)|^2)$.  
    Can this be improved to $O(|\C||V(G)|)$, or even $O(|\C| + |V(G)|)$?
    
    \item 
    Is the minimum tree completion problem NP-hard?

    \item 
    Does the greedy Algorithm~\ref{alg:greedy} achieve a constant factor approximation, with the post-processing mentioned in the tree completion section?

    \item 
    Is the minimum perfect transfer reconstruction NP-hard?

    \item 
    Suppose that the taxa set is $\S = 2^{\C}$ for characters $\C$.  Then what exactly is the minimum number of transfers of a PTN for $\S$?

    \item 
    How can our model be extended to characters with multiple states?  And would the underlying recognition problem be easy?
    
\end{itemize}

\section*{Conclusion}
In this contribution, we have introduced \emph{ perfect transfer networks} (PTNs)
as a model that aims to combine the structural properties of tree-based networks with the classical notion of perfect characters. In the absence of HGT events, PTNs coincide with the notion of perfect phylogenies. To better understand these, we studied their properties, stated the main differences between them and recombination networks as well as perfect phylogeny networks which is to our knowledge the closest model related to ours. Additionally, we explored several algorithmic challenges that result from this model with potential applications for HGT inference methods using character-based information that does not rely on sequence similarities. To the best of our knowledge, this is the initial theoretical endeavor employing the ``once acquired never lost'' principle through tree-based networks for inferring HGT events. While we acknowledge the simplicity of this model, it represents an initial stride towards incorporating additional conditions. Our intention is to develop more intricate models that better align with the complexity of biological systems. Although widely used throughout the literature, perfect characters impose strong restrictions for our model, since each character is allowed to change of state at most once. A potential extension of our model would be to include different models for character state changes as in \emph{Dollo parsimony} \cite{dolloparsimony}. 
This model allows losing an acquired character but not gaining it twice. This assumption has been shown to be more suitable for complex characters such as restriction sites and introns \cite{inferringphylos}. Another possible extension of our model would be to include expression levels of the different characters instead of discrete changes which is a problem similar to that of multi-state perfect phylogenies \cite{Gusfield2010}. 

Several other questions seem to be appealing for future work. Most importantly, since finding experimental datasets that use gene expression seems like a challenging task, the generation of \emph{in-silico} datasets to test our algorithms seems to be a pertinent solution. Nevertheless, to our knowledge there is no simulation framework that combines evolutionary histories with gene expression data. Therefore, a future direction for this project could also be the design of a simulation environment that is able to generate this type of data.

\section*{Acknowledgements}
The authors would like to thank the reviewers for their helpful comments and for pointing out paper~\cite{nakhleh}.

\section*{Funding}
Alitzel L\'opez S\'anchez acknowledges financial support from the programme de bourses d'excellence en recherche from the University of Sherbrooke.

Manuel Lafond acknowledges financial support from the Natural Sciences and Engineering Research Council (NSERC) and the Fonds de Recherche du Québec Nature et technologies (FRQNT)

\clearpage
\bibliographystyle{plain}
\bibliography{references}
\clearpage

\end{document}